%%%%%%%%%%%%%%%%%%%%%%%%%%%%%%%%%%%%%%%%%%%%%%%%%%%%%%%%%%%%%%%%%%%%%%%%%%%%%%%%
%2345678901234567890123456789012345678901234567890123456789012345678901234567890
%        1         2         3         4         5         6         7         8

\documentclass[letterpaper, 10 pt, conference]{ieeeconf}  % Comment this line out
                                                          % if you need a4paper
%\documentclass[a4paper, 10pt, conference]{ieeeconf}      % Use this line for a4
                                                          % paper

\IEEEoverridecommandlockouts                              % This command is only
                                                          % needed if you want to
                                                          % use the \thanks command
\overrideIEEEmargins
% See the \addtolength command later in the file to balance the column lengths
% on the last page of the document

% The following packages can be found on http:\\www.ctan.org
%\usepackage{graphics} % for pdf, bitmapped graphics files
\usepackage{epsfig} % for postscript graphics files

\title{\LARGE \bf
Coupled Control Systems: Periodic Orbit Generation with\\ 
Application to Quadrupedal Locomotion
}

%\author{ \parbox{3 in}{\centering Huibert Kwakernaak*
%         \thanks{*Use the $\backslash$thanks command to put information here}\\
%         Faculty of Electrical Engineering, Mathematics and Computer Science\\
%         University of Twente\\
%         7500 AE Enschede, The Netherlands\\
%         {\tt\small h.kwakernaak@autsubmit.com}}
%         \hspace*{ 0.5 in}
%         \parbox{3 in}{ \centering Pradeep Misra**
%         \thanks{**The footnote marks may be inserted manually}\\
%        Department of Electrical Engineering \\
%         Wright State University\\
%         Dayton, OH 45435, USA\\
%         {\tt\small pmisra@cs.wright.edu}}
%}

% numbers option provides compact numerical references in the text. 
% \usepackage[numbers]{natbib}
% \usepackage{multicol}
% \usepackage[bookmarks=true]{hyperref}

\usepackage{amsfonts}
\usepackage{amsmath} 
\usepackage{amsthm}
    \theoremstyle{plain}
\usepackage{mathrsfs}
\usepackage{amssymb}
\usepackage{mathrsfs}

\usepackage{color}
\usepackage{euscript}
\usepackage{url}
\usepackage{wrapfig}
\usepackage{array}
\usepackage{xspace}
\usepackage{comment}
\usepackage{titlecaps}
\Addlcwords{are or etc in}
\usepackage{siunitx}
\sisetup{per-mode=symbol}

\usepackage{cases}
\usepackage{mathtools}

\usepackage{style/amesstyle}
\usepackage{style/wma}

\author{Wen-Loong Ma, Noel Csomay-Shanklin and Aaron D. Ames% <-this % stops a space
\thanks{This work is supported by NSF grant 1724464.}% <-this % stops a space
\thanks{The authors are with the department of Mechanical Engineering, and Control+Dynamical Systems, California Institute of Technology, Pasadena, CA, USA. {\tt\small wma, noelcs, ames@caltech.edu}.}
}

\begin{document}

\maketitle
\thispagestyle{empty}
\pagestyle{empty}

% % % L-CSS cover
% We formulated the notion of the coupled control system and designed an optimization algorithm in this paper, for the purpose of numerically finding periodic solutions of the full-body dynamics of quadrupedal locomotion quickly. The solution, which is an ambling gait, is tested on the hardware for rough terrain walking. 

%%%%%%%%%%%%%%%%%%%%%%%%%%%%%%%%%%%%%%%%%%%%%%%%%%%%%%%%%%%%%%%%%%%%%%%%%%%%%%%%
\begin{abstract}
A robotic system can be viewed as a collection of lower-dimensional systems that are coupled via reaction forces (Lagrange multipliers) enforcing holonomic constraints. Inspired by this viewpoint, this paper presents a novel formulation for nonlinear control systems that are subject to coupling constraints via virtual ``coupling'' inputs that abstractly play the role of Lagrange multipliers. The main contribution of this paper is a process---mirroring solving for Lagrange multipliers in robotic systems---wherein we isolate subsystems free of coupling constraints that provably encode the full-order dynamics of the coupled control system from which it was derived. This dimension reduction is leveraged in the formulation of a nonlinear optimization problem for the isolated subsystem that yields periodic orbits for the full-order coupled system. We consider the application of these ideas to robotic systems, which can be decomposed into subsystems. Specifically, we view a quadruped as a coupled control system consisting of two bipedal robots, wherein applying the framework developed allows for gaits (periodic orbits) to be generated for the individual biped yielding a gait for the full-order quadruped. This is demonstrated through walking experiments of a quadrupedal robot in simulation and on rough terrains.
\end{abstract}

%% Intro
\section{Introduction} 
\label{sec:intro}

To achieve dynamic walking on high-dimensional robotic systems, hybrid zero dynamics (HZD) has proven to be a successful methodology as a result of its ability to make theoretic guarantees \cite{Westervelt2007a, Grizzle2014Models, ames14CLF} and yield walking for complex humanoids \cite{Sreenath2011, reher2016realizing}. 
% \cite{Sreenath2011, %ma2017bipedal, reher2016realizing}. 
The main idea behind this approach is that the full-order dynamics of the robot can be reduced to a lower-dimensional surface on which the system evolves. The system can then be studied via the low-dimensional dynamic representation and, importantly, guarantees made can be translated back to the full-order dynamics, i.e., periodic orbits (or walking gaits) in the low-dimensional system imply corresponding periodic orbits in the full-order system. The goal of this paper is to capture this dimension reduction in a more general context---that of \systems, which capture the ability to decompose a complex system into low-dimensional subsystems.

Another means of dimension reduction for robotic systems comes from isolating subsystems and coupling these subsystems at the level of reaction forces, i.e., Lagrange multipliers that enforce holonomic constraints. This is the idea underlying the highly efficient method for calculating the dynamics of robotic systems: Spatial vector algebra \cite{featherstone2014rigid}. For example, a double pendulum can be decomposed into two single pendula connected via a constraint at the pivot joint \cite{ganesh2007composition}. More generally, one can consider two equivalent ways of expressing the dynamics of a robotic system \cite{Murray1994mathematical}: 
\par\vspace{-4.3mm}{\small\begin{align*}
    \underbrace{D(q) \ddot{q} + H(q,\dot{q}) = u}_{\textrm{Full-Order Dynamics}} 
    \Leftrightarrow 
    \underbrace{\begin{dcases}
    D_i(q_i) \ddot{q}_i + H_i(q_i,\dot{q}_i) = u_i + J_{h_i}^\top \lambda  \\
    \qquad \qquad  \mathrm{s.t.} \qquad  h(q)= 0
    \end{dcases}}_{\textrm{Reduced-Order Coupled Dynamics}}
\end{align*}}%
\vspace{-3.4mm}\par\noindent
for $i = 1,2$, where $h$ is a coupling (holonomic) constraint that is enforced via the Lagrange multiplier $\lambda$ allowing for the higher-dimensional $q$ to be decomposed into lower-dimensional $q_i$, i.e., $q = (q_1,q_2)$. For example, a quadrupedal robot can be decomposed into two bipeds as in \figref{fig:frontPage}. Thus, if one can make guarantees on the reduced-order coupled systems, they can be translated to the full-order dynamics. 
% This idea has also, again, proven successful in robotic walking when dealing with what are referred to as the ``pinned'' and ``unpinned'' models---this success has been especially prevalent in the context of constructing efficient nonlinear constrained optimization problems, especially for bipedal walking robots that utilize hybrid zero dynamics \cite{hereid_dynamic_TRO}.

The study of coupled dynamic and control systems has a long and rich history from which the framework presented in this paper has taken inspiration. The most prevalent example is that of multi-robot systems \cite{mesbahi2010graph}, and specifically the consensus problem \cite{ren2005survey}. 
%\cite{olfati2004consensus,ren2005survey}, along with nonlinear extensions \cite{yu2011consensus}. 
Interconnected systems have also been well-studied \cite{antonelli2013interconnected}. 
%, including extensions over graphs \cite{langbort2004distributed}.  
% Examples of this include nonlinear oscillators \cite{fujisaka1983stability}, with Kuramoto oscillators of particular note \cite{chopra2009exponential}, wherein synchronization can be viewed as a coupling constraint as considered here. 
In the context of mechanical and robotic systems on graphs, network synchronization has been considered \cite{chung2009cooperative}.
% and along with the synchronization of Lagrangian systems using passivity \cite{spong2007synchronization}.  
Port-Hamiltonian systems also capture the notion of coupling present in general mechanical systems \cite{van2014port}. 
Finally, in related work, the coordination of quadruped and human reaction forces has recently been studied \cite{hamed2019hierarchical}. 
While not explicitly discussed due to space constraints, many of these formulations fit within the general setting of coupled control systems presented here.

\begin{figure}[t!]
	\centering
	\includegraphics[width=0.43\textwidth]{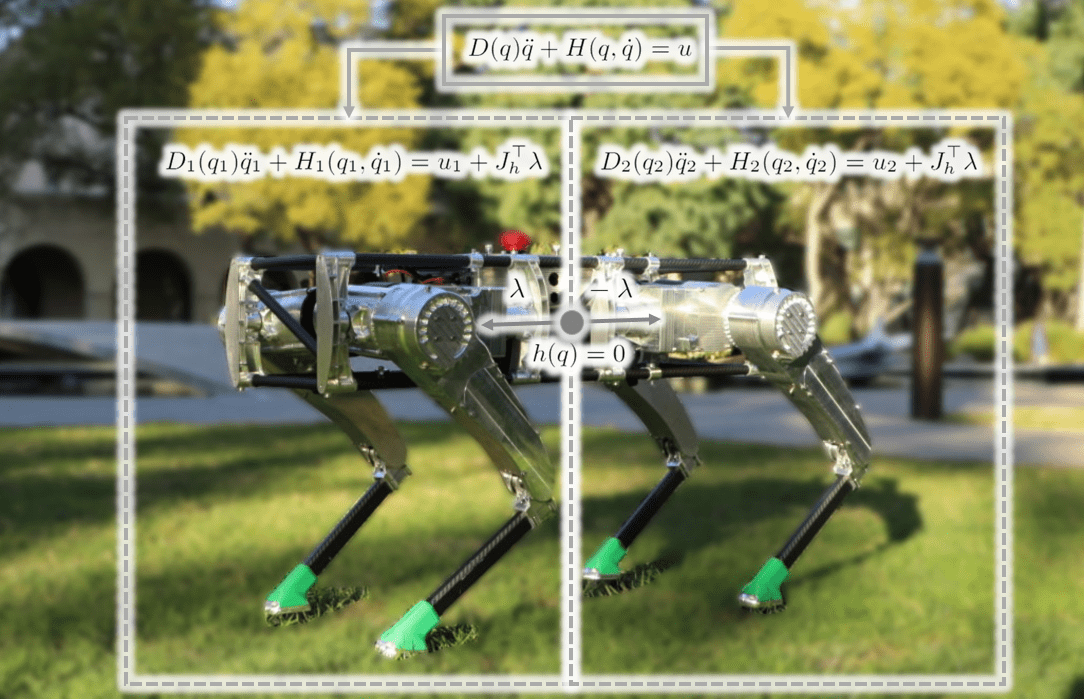}
	\vspace{-2.5mm}
	\caption{Conceptual illustration of the full body dynamics decomposition, where the 3D quadruped --- the Vision 60 --- is decomposed into two constrained 3D bipedal robots.}
	\label{fig:frontPage}
	\vspace{-4.8mm}
\end{figure}

\begin{comment}
This paper generalizes the aforementioned methods --- zero dynamics and system decomposition through coupling constraints --- and unifies them through a novel formulation: {\it coupled control systems}. 
% Specifically, we consider the full-order dynamics represented by a (bidirected) graph on which there is a nonlinear control system on every vertex including internal and coupling dynamics and constraints on every edge. 
We then utilize zero dynamics to reduce to a subsystem dependent on coupling constraint which is then eliminated via coupling relations to yield the final isolated subsystem. %, i.e., a subsystem that no longer depends on the coupling constraint and is of lower dimension than the original full-order coupled dynamics from which it was derived. 
The main result of this paper is that trajectories of the isolated subsystem are trajectories of the full-order system, and thus periodic orbits on the subsystem yield periodic orbits on the full-order system. This result is leveraged to construct a nonlinear optimization problem utilizing collocation methods to generate these periodic solutions. 
\end{comment}

This paper generalizes the aforementioned methods --- zero dynamics and system decomposition through coupling constraints --- and unifies them through a novel formulation: {\it coupled control systems}. We then utilize zero dynamics to reduce to a subsystem dependent on coupling constraint which is then eliminated via coupling relations to yield the final 
isolated subsystem. The main result of this paper is that solutions of the isolated subsystem are solutions of the full-order system, and thus periodic orbits on the subsystem yield periodic orbits on the full-order system. This result is leveraged to construct a nonlinear optimization problem utilizing collocation methods to generate these periodic solutions.

Our motivating application is gait (periodic orbit) generation for quadrupedal robots.
% Having established the formal underpinnings of \system along with their algorithmic instantiation, we return to the application that motivated the framework formulated in this paper: gait (periodic orbit) generation for quadrupedal robots. 
%
Previously, HZD methods were applied to quadrupedal walking \cite{ma2019First}; yet the high dimensionality of this system made it computationally expensive to generate gaits when compared to their bipedal analogs. To address this shortcoming, recent work has aimed at decomposing quadruped into bipedal robots \cite{ma2019bipedal}---it is this methodology that this paper formalizes and extends. 
Therefore, we consider a quadrupedal robot utilizing the \system~ paradigm, wherein this system can be reduced to lower-dimensional subsystems on which periodic orbits (gaits) can be generated. % that are thus walking gaits for the full-order quadruped dynamics. 
We demonstrate the results through the realization of these generated gaits experimentally to achieve stable walking on rough terrains.

\section{Coupled Control Systems} \label{sec:def}

This section introduces the notion of \systems, for which a collection of differential equations are coupled via algebraic coupling condition. The goal is to present the basic paradigm used throughout the paper. % which includes the definitions of \systems\ and their solutions. 

We first introduce a bidirectional graph $\Graph=(\V,\E)$ where the vertices $\V=\{1,2\}$ represent the indices of the subsystems and edges $\E=\{(1,2),(2,1)\}$ represent their connections. We then denote $\X=\{\X_i\}_{i\in\V}$ as a set of internal states, $\Z=\{\Z_i\}_{i\in\V}$ as a set of coupled states, and $\Us=\{\Us_i\}_{i\in\V}$ as a set of admissible control inputs. 
In addition, we  assume $i\neq j\in\Q$ and $e=(i,j),\overline{e}=(j,i)\in\E$ throughout the paper. 

We can now define the main object of interest. 
\vspace{-1.4mm}
\begin{definition} \label{def:systems}
    \emph{ A \defit{\system~(CCS)} $\CC$ is defined on a graph $\Graph$ 
    % $
    % \Graph = \left(\Q = \{1,2\}, \E = \{e = (1,2), \overline{e} = (2,1)\}\right)
    % $ 
    and a conditional expression:
    % for $i\in\Q$ and $e=(i,j),\overline{e}=(j,i)\in\E$: 
    }
    \par\vspace{-3.3mm}{\small 
    \begin{align}\label{eq:symmetric_dynamics}
    \CC \triangleq &\begin{cases}
        \dot{x}_i = f_i(x_i,z_i) + g_i(x_i,z_i) u_i + \breve{g}_{e}(x_i,z_i,z_j) \lambda_{e} \\ 
        \dot{z}_i = p_i(x_i,z_i) + q_i(x_i,z_i) u_i + \breve{q}_{e}(x_i,z_i,z_j) \lambda_{e}  \\
        \mathrm{s.t.} 
        \quad  c_{e}(z_i, z_j) = z_i - z_j =-c_{\bar{e}}(z_j,z_i)\equiv 0 \\
        \hspace{0.73cm} \lambda_e = - \lambda_{\overline{e}}, 
    \end{cases} 
    \end{align}}%
    \vspace{-3mm}\par\noindent
    \emph{where,
    $x_i\in\X_i, z_i\in\Z_i, u_i\in\Us_i$, and $c_e(z_i,z_j) \equiv 0 $ is a \emph{coupling constraint} enforced by the \emph{coupling inputs} $\lambda_e$, where $\equiv$ represents the identical equality of functions.
    }
\end{definition}
\vspace{-1mm}
We additionally denote 
$x = \{x_1, x_2\}\in\X$, 
$z = \{z_1, z_2\}\in\Z$,
$u = \{u_1, u_2\}$ and
$\lambda = \{\lambda_e, \lambda_{\bar e}\}$ throughout the paper.

% % % % % % % % % % % % % % % % % % % % % % % % % % % % % % % % 
\newsec{Solutions.}  
We define solutions of \systems\ by assuming the existence of feedback control laws: $u(x,z) \defeq \{u_1(x_1,z), u_2(x_2,z)\}$. 
Applying these controllers to \eqref{eq:symmetric_dynamics} yields a \defit{coupled dynamical system (CDS)}: 
\par\vspace{-4mm}{\small
\begin{align} \label{eq:controlled_dynamicsc}
\CD \triangleq &\begin{cases}
    \dot{x}_i = f_i^{\cl}(x_i,z) + \breve{g}_{e}(x_i, z) \lambda_{e} \\ 
    \dot{z}_i = 
    p_i^{\cl}(x_i,z) + \breve{q}_{e}(x_i,z) \lambda_{e}  \\
    % \mathrm{s.t.} \quad  c_{e}(z) \equiv 0 \\ % \quad e = (i,j) \in \Graph_i
    % \hspace{0.72cm} \lambda_e = - \lambda_{\overline{e}} 
    \mathrm{s.t.} \quad  c_{e}(z) \equiv 0, 
    \quad \lambda_e = - \lambda_{\overline{e}} 
    \end{cases} 
\end{align}}%
\vspace{-3mm}\par\noindent
where, $f_i^{\cl} \defeq f_i(x_i,z_i) + g_{i}(x_i,z_i) u_i(x_i,z)$, 
and 
$p_i^{\cl} \defeq p_i(x_i,z_i) + q_{i}(x_i,z_i) u_i(x_i,z).$ 
Then the \defit{solution} of the coupled dynamic system, $\CD$, is a set of solutions: 
\par\vspace{-4mm}{\small 
\begin{align*} 
    \Big\{ \big( x_1(t),z_1(t),\lambda_e(t) \big), \big( x_2(t), z_2(t),  \lambda_{\bar{e}}(t) \big) \Big\} \ 
    \mathrm{s.t.}\  \eqref{eq:controlled_dynamicsc} \ \forall t\in\mathbf{I}\subset\R 
\end{align*}}% 
\vspace{-4mm}\par\noindent 
with initial condition: 
$\big\{ ( x_1(0),z_1(0),\lambda_e(0) ), ( x_2(0), z_2(0), \\ 
\lambda_{\bar{e}}(0) ) \big\}$, and $\mathbf{I}\subset\R$ is the time interval of their existence. Per the above notation, we will sometimes denote the solutions by $(x(t),z(t),\lambda(t))$ with initial condition $(x(0),z(0),\lambda(0))$.

% % % % % % % % % % % % % % % % % % % % % % % % % % % % % % % %
\begin{figure}[t]
	\centering
	\vspace{2mm}
	\includegraphics[width=0.45\textwidth]{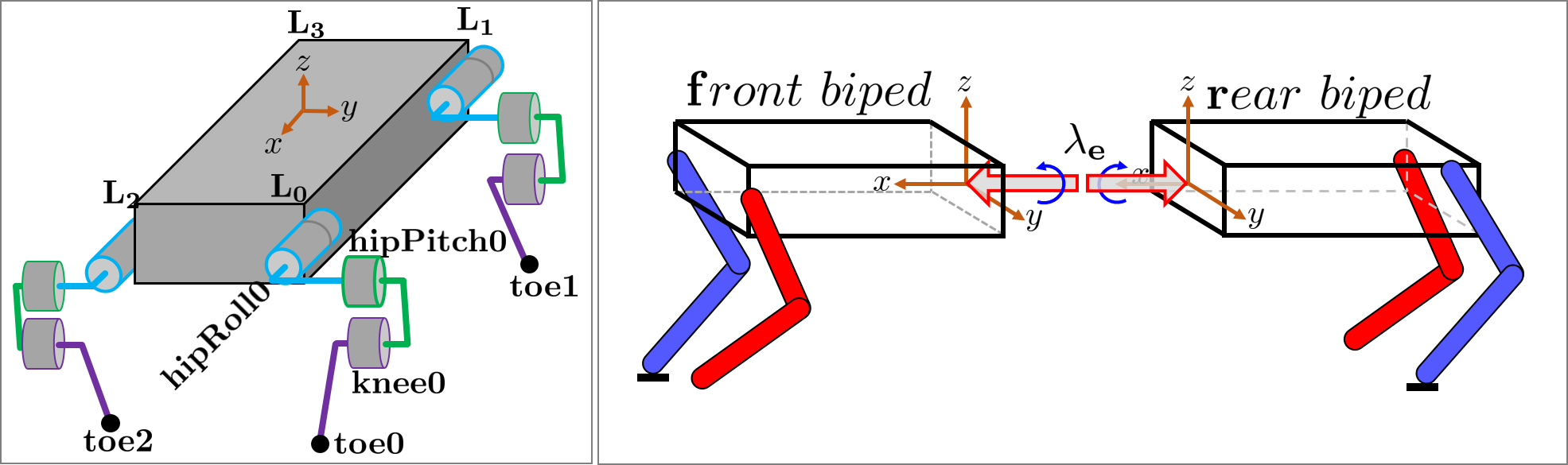}
	\vspace{-3mm}
	\caption{{\small
	Left: the configuration coordinates of the quadruped, 
	each leg of which has a point contact toe. 
	Right: the decomposition of a quadrupedal robots into two bipedal systems. 
	}}\label{fig:quad}
	\vspace{-10mm}
\end{figure}
% % % % % % % % % % % % % % % % % % % % % % % % % % % % % % % %

\newsec{Coupling constraints.} 
Importantly, the solutions must satisfy the coupling constraints % \eqref{eq:couple_condition} 
at all time. Therefore, 
\par\vspace{-4mm}{\small 
\begin{align}
    \label{eq:cdotcond}
    &c_e(z) \equiv 0 \ \Rightarrow\ \dot{c}_e(z, \dot z) \equiv 0  
    \\
    &\hspace{14mm}
    \Rightarrow \underbrace{\frac{\partial c_e(z_i,z_j) }{\partial z_i}}_{\defeq \Je{(i,j)}(z)} \dot{z}_i + 
    \underbrace{\frac{\partial c_e(z_i,z_j) }{\partial z_j}}_{\defeq  \Je{(j,i)}(z)}\dot{z}_j 
    \equiv 0  \notag
    \\
    \label{eq:cdotconstraint}
    &\Rightarrow
    \dot{c}_e(x,z)
    = \Je{(i,j)}(z) \left( p_i^{\cl}(x_i,z) 
            + \breve{q}_{e}(x_i, z) \lambda_{e}  
        \right) \notag \\
    &\hspace{13mm}+ 
        \Je{(j,i)}(z)\left( p_j^{\cl}(x_j,z) 
            + \breve{q}_{\bar{e}}(x_j, z) \lambda_{\bar{e}} 
        \right) 
    \equiv 0
\end{align}}%
\vspace{-5mm}\par\noindent 
% \par\vspace{-4mm}{\small 
% \begin{align}
%     \label{eq:cdotcond}
%     &\ \dot{c}(x,z) \equiv 0 
%     \ \Rightarrow\ 
%     \dot{c}_e(x, z, \dot z) \equiv 0  \quad \forall ~ e\defeq(i,j) \in \E  
%     \\
%     &\Rightarrow
%     \dot{c}_e(x,z,\dot{z})  =
%     \underbrace{\frac{\partial c_e(z_i,z_j) }{\partial z_i}}_{\defeq \Je{(i,j)}(z)} \dot{z}_i + 
%     \underbrace{\frac{\partial c_e(z_i,z_j) }{\partial z_j}}_{\defeq  \Je{(j,i)}(z)}\dot{z}_j \equiv 0  
%     \notag\\
%     \label{eq:cdotconstraint}
%     &\Rightarrow
%     \dot{c}_e(x,z)
%     = \Je{(i,j)}(z) \left( p_i^{\cl}(x_i,z) 
%             + \breve{q}_{e}(x_i,z_i,z_j) \lambda_{e}  
%         \right) \notag \\
%     &\hspace{10mm} 
%      + \Je{(j,i)}(z)\left( p_j^{\cl}(x_j,z) 
%             + \breve{q}_{\bar{e}}(x_j,z_j,z_i) \lambda_{\bar{e}} 
%         \right) \equiv 0
% \end{align}}%
% \vspace{-5mm}\par\noindent 
Hence, to solve for the coupling inputs $\lambda_e$ that satisfy the coupling constraints, it is necessary to solve an equation that depends on the states of both subsystems. To address this, we present a method for isolating a subsystem via conditions on the controllers of the other systems in the next section. 
%
%
% % % % % % % % % % % % % % % % % % % % % % % % % % % % % % 
% % % % % % % % % % % % % % % % % % % % % % % % % % % % % % 
Before doing this, we utilize the following example to illustrate the concepts of \systems.

\newsec{Application to quadrupedal robots.}
The motivating application considered here, is to compute periodic solutions of the quadrupedal dynamics. As \figref{fig:quad} shown, we decompose this quadruped into two bipeds, whose dynamics are on a CCS graph (according to definition \ref{def:systems}): 
$\Graph \defeq \big( \V = \{\rmf,\rmr\}, \E=\{e=(\rmf,\rmr),\bar e=(\rmr,\rmf)\} \big)$, 
where $\rmf,\rmr$ label the \textit{front} and \textit{rear} bipedal systems, correspondingly. We picked the coordinates for these two subsystems as 
$   
    q_\rmf = (\xi_\rmf^\top, \theta_{\rmL_2}^\top, \theta_{\rmL_0}^\top)^\top, 
    %\in\mathcal{Q}_\rmf, 
    q_\rmr = (\xi_\rmr^\top, \theta_{\rmL_1}^\top, \theta_{\rmL_3}^\top)^\top
    % \in\mathcal{Q}_\rmr
$ 
with $\xi_i\in\R^3\times\mathrm{SO}(3)$ and the leg joints $\theta_{L_*}\in\R^3$. Since all leg joints are actuated, the inputs are $u_i\in\mathcal{U}\subset\R^6$.
The decomposed dynamics of a quadruped as two coupled bipeds in the \textit{continuous phase}\footnote{
% % % % % % % % % % % % % % % % % % % % % % % % % % % % % 
The definition for continuous and discrete dynamics (impact dynamics) for hybrid control systems can be found in \cite{Grizzle2014Models}, which is less relevant to the main theme here, hence omitted.
% % % % % % % % % % % % % % % % % % % % % % % % % % % % % 
}, are given by a set of DAEs: 
\par\vspace{-4mm}{\small 
\begin{numcases}{\quadruped\defeq}
    D_i \ddot q_i + H_i = J_i^\top F_i + B_i u_i + J_e^\top\lambda_e
    \label{eq:ol21}\\
    J_i \ddot q_i + \dot J_i \dot q_i = 0
    \label{eq:ol22}\\
    \mathrm{s.t.\quad} 
    c_e(\xi_i, \xi_j) = \xi_i - \xi_j \equiv 0
    \label{eq:ol23} \\
    \hspace{8mm} \lambda_e=-\lambda_{\bar{e}} 
    \label{eq:ol24} 
\end{numcases}}%
\vspace{-3mm}\par\noindent 
% The process of converting the full-body dynamics into $\quadruped$. 
with $D_i(q_i)\in\R^{n\times n}$ the mass-inertia matrix, 
$H_i(q_i, \dot q_i)\in\R^n$ the drift vector,  % involving Coriolis forces and gravity. 
and the actuation matrix $B_i=\begin{bmatrix}\0_{6\times 6} & I_{6\times 6} \end{bmatrix}$. 
The contact (holonomic) constraint $h_i(q_i)\equiv 0$ is enforced via ground reaction forces $F_i\in\R^3$, whose second derivative is given in \eqref{eq:ol22}. More details of these notations can be found in \cite{ma2019bipedal}.
\begin{comment}
    The second derivative of the contact (holonomic) constraints $h_i(q_i)\equiv 0$ with $J_i=\partial h_i(q_i) /\partial q_i$ are given in \eqref{eq:ol22}. 
    This is the Cartesian positions of toe1 and toe2 for the front and rear bipeds, respectively. This algebraic constraint is enforced via a ground reaction force $F_i\in\R^3$. 
    The \defit{coupling input} $\lambda_e$ is a Lagrangian multiplier (or \textit{constraint wrench}) that holds the two bipeds together. 
\end{comment}
%
Note that $F_i$ can be eliminated by the solving \eqref{eq:ol21}-\eqref{eq:ol22} to have a shorter form: $D \ddot q + \bar H = \bar B u + \bar J_e^\top \lambda_\rmc$. 
% which represents the bipedal dynamics subject to the coupling constraint. $h_i(q_i)\equiv 0$ and the coupling input $\lambda_e$. 
The derivation is straightforward hence ignored.

% \newsec{Virtual Constraints \& Normal form. }
To obtain a CCS as in \eqref{eq:symmetric_dynamics}, we pick ``normal form'' type coordinates  (see \cite{Sastry1999Nonlinear}), with the ``output'' 
(also known as virtual constraint \cite{Westervelt2007a}) 
% Particularly, the ``output'' is the intended controllable feature for each subsystem: 
that we wish to zero, given by 
\par\vspace{-4.5mm}{\small 
\begin{align} % \label{eq:virtualconstraintquad}
    % y_i(q_i, \alpha_i) = H_a \theta_i - y^d( \xi_i, \alpha_i ), 
    y_i(q_i, \alpha_i) = y^a(q_i) - y^d( \xi_i, \alpha_i ), 
    \label{eq:quadoutput}
\end{align}}%
\vspace{-5mm}\par\noindent 
where $y^a, y^d$ are the actual and desired outputs, $\xi_i$ represents a parameterization of time and $\alpha_i\in\R^{6\times 6}$ are the coefficients for six $5^{\mathrm{th}}$-order Be\'zier polynomials that are designed by the optimization algorithm in Sec.\ref{sec:opt}. Since our goal is to find a \textit{symmetric ambling} gait for quadrupeds, we chose $\alpha_\rmr = \mathcal{M} \alpha_\rmf $, with the matrix $\mathcal{M}$ representing a mirroring relation. 
It is important to note that the output coordinate here utilizes a state-feedback structure, instead of the time-based construction of \cite{ma2019bipedal}. 
We can then construct our internal states $x_i = (y_i^\top, \dot{y}_i^\top)^\top$, leaving the coupled states as $z_i = (\xi_i^\top, \dot{\xi}_i^\top)^\top$.  
The end result is a CCS of the form given in \eqref{eq:symmetric_dynamics} for this mechanical system: 
% \begin{comment}
\par\vspace{-3.8mm}{\small 
\begin{align*} %\label{eq:adynamics}
% \begin{cases}
    &\dot x_i =
        \underbrace{\begin{bmatrix} \dot{y}_i \\ \dot J_{y_i}\dot{q}_i -J_{y_i} D_i^{-1}\bar{H}_i \end{bmatrix}}_{f_i(x_i,z_i)} + 
        \underbrace{\begin{bmatrix} 0 \\ J_{y_i} D_i^{-1}\bar{B}_i \end{bmatrix}}_{g_i(x_i,z_i)}u_i + 
        \underbrace{\begin{bmatrix} 0 \\ J_{y_i} D_i^{-1}\bar{J}_e^\top \end{bmatrix}}_{\breve{g}_{e}(x_i,z_i,z_j)} \lambda_{e} 
    \vspace{1mm}\\ 
    &\dot z_i = 
        \underbrace{\begin{bmatrix} \dot\xi_i \\ -J_\xi D_i^{-1}\bar{H}_i \end{bmatrix}}_{p_i(x_i,z_i)} + 
        \underbrace{\begin{bmatrix} 0 \\ J_\xi D_i^{-1} \bar{B}_i \end{bmatrix}}_{q_i(x_i,z_i)} u_i + 
        \underbrace{\begin{bmatrix} 0 \\ J_\xi D_i^{-1} \bar{J}_e^\top  \end{bmatrix}}_{\breve{q}_{e}(x_i,z_i,z_j)} \lambda_{e}
    \vspace{1mm}\\ 
    &\ \ 
    \mathrm{s.t.\quad} c_e(z_i, z_j) = z_i - z_j
% \end{cases} 
\end{align*}}% 
\vspace{-5.5mm}\par\noindent 
with $J_{y_i} = \partial y_i(q_i)/\partial q_i$, $J_\xi=\partial \xi/\partial q = \begin{bmatrix}I_{6\times 6} & 0_{6\times 6}\end{bmatrix}$, where we suppressed the dependency on $x_i, z_i$ for all entries.

\section{Isolating Control Subsystems} \label{sec:isolating}

The main idea in approaching the analysis and design of controllers for \systems\ is to isolate subsystems that encode the behavior of the overall CCS. This section outlines the procedure for isolating the subsystems through a two-step approach: restricting systems to the \textit{zero dynamics} manifold, and leveraging this to explicitly calculate the coupling conditions. We then can reduce the full-order CCS to a subsystem that no longer depends on the internal states of the other subsystem. We establish the main result of the paper encapsulating these constructions: solutions of the subsystem yield solutions of the full-order dynamics.  

% We define coupling conditions between these systems, and show how coupled systems under the proper connectivity conditions can be reduced to a single control system that can, in turn, be used to study the dynamics of the entire coupled control system. Along the course, we will demonstrate this concept with an example consisting of a double inverted pendulum.

\subsection{$\lambda$-Coupled Subsystem}
% \subsection{Zero Dynamics}
% In addition to restricting the dynamics to the constraint manifold, we are also interested in understanding the controllers that restrict the system to invariant surfaces. This can be utilized to further isolate subsystems of the CCS in a way amenable to analysis. 

Given a CCS $\CC$, we define the \defit{zero dynamics manifold} for each subsystem $i\in\Q$ as: 
\par\vspace{-4mm}{\small 
\begin{align}\label{eq:zerodynman}
    \Zbf_i \defeq \{ (x,z) \in \X \times \Z ~ | ~ x_i \equiv 0 \}.
\end{align}}%
\vspace{-5mm}\par\noindent 
Thus, the zero dynamics manifold for \ith subsystem consists of the internal states, $x_i$, being zero, i.e., the system evolves only according to the coupled states $z$. 

% The key idea underlying the analysis of CCSs is to reduce the entire coupled system into the behavior of a single subsystem. This is achieved through the above constructions related to the zero dynamics
We wish to design controllers of the overall CCS on the zero dynamics of subsystem $j\in\Q$. Therefore, a controller $\uZlambda_j(x_j,z)$ is said to \defit{render the zero dynamics manifold $\Zbf_j$ invariant} if it satisfies the following algebraic condition:
\par\vspace{-4mm}{\small
\begin{align}\label{eq:zerodyncond}
    0 \equiv f_j(0,z_j) + g_j(0,z_j) \uZlambda_{j}(0,z) + \breve{g}_{\overline{e}}(0,z) \lambda_{\bar e}
\end{align}}%
\vspace{-5mm}\par\noindent 
where $\uZlambda_j$ implicitly depends on $\lambda_{\bar e}$ for $\bar e=(j,i)\in\E$. 
% For the sake of definiteness, pick $j=2$ and thus $\uZlambda_2$ renders $\Zbf_2$ invariant. The end result is a \defit{$\lambda$-coupled control subsystem ($\lambda$-CCSub)} for the subsystem $i = 1$:
% $\lambda$-coupled CSub for $i = 1$:
By applying $\uZlambda_j$, we obtain a \defit{$\lambda$-coupled control subsystem ($\lambda$-CCSub)} for the \ith subsystem:
\par\vspace{-4mm}{\small
\begin{align}\label{eq:lambdasubsystemsym}
    \SCClambda{i} \defeq 
    \begin{cases}
        \dot{x}_i = f_i(x_i,z_i) + g_i(x_i,z_i) u_i +  \breve{g}_{e}(x_i,z) \lambda_{e} \\ 
        \dot{z}_i =  p_i(x_i,z_i) + q_i(x_i,z_i) u_i + \breve{q}_{e}(x_i,z) \lambda_{e} \\
        \dot{z}_j = p_j(0,z_j) + q_j(0,z_j) \uZlambda_j(0,z) + \breve{q}_{\bar{e}}(0,z) \lambda_{\bar e}  \\
        \mathrm{s.t.} \quad c_e(z) = z_i - z_j \equiv 0, 
        % \hspace{0.72cm} c_{\overline{e}}(z_j,z_i) = z_j - z_i \equiv 0 \\
        % \hspace{0.72cm} \lambda_e = - \lambda_{\bar{e}}
        \quad \lambda_e = - \lambda_{\bar{e}}
        \raisetag{10pt}
    \end{cases} 
\end{align}}% 
\vspace{-3mm}\par\noindent 
Thus, the \ith subsystem evolves according to its own dynamics and the zero dynamics of all remaining systems---all of which are coupled via the coupling inputs $\lambda$. % = \{\lambda_e, \lambda_{\bar e}\}$. 

\subsection{Explicit Coupling Conditions}
The coupling between the control systems \eqref{eq:symmetric_dynamics} is enforced via $\lambda$ and the coupling constraints of the form \eqref{eq:cdotconstraint}.  Similarly, even in the reduction to a subsystem \eqref{eq:lambdasubsystemsym}, the coupling is still achieved through $\lambda$. We wish to generalize this so as to remove the coupling, i.e., isolate subsystems, while still preserving the overall behavior of the full system. 
We first define the \defit{coupling relation} that allows the use of the controllers $\uZlambda_j$ to eliminate the dependence on the controllers and internal states of the other subsystem. 

% \begin{definition}\label{def:couplingrelation}
%     For a $\lambda$-coupled control subsystem $\SClambda$ ($i \in \Q$), a \defit{coupling relation} is a functional relationship on the coupling inputs: 
%     \begin{align}\label{eq:lambdaeform}
%         \lambdaZ_{e}(x_i,z;u_i) =  \AZ_{e}(x_i,z) u_i +  \bZ_{e}(x_i,z), 
%     \end{align}
%     for all $e\in\E$ that satisfies the coupling constraint 
%     %\eqref{eq:cdotconstraintwithu}.
%     \eqref{eq:cdotcond}. 
% \end{definition}

\vspace{-1mm}
\begin{definition}\label{def:couplingrelation}
    \emph{For a $\lambda$-CCSub $\SClambda$ and $i\in\Q$, a \defit{coupling relation} is a functional relationship on the coupling inputs
    }
    \par\vspace{-4mm}{\small 
    \begin{align}\label{eq:lambdaeform}
        \lambdaZ_{e}(x_i,z;u_i) =  \AZ_{e}(x_i,z) u_i +  \bZ_{e}(x_i,z), 
    \end{align}}
    \vspace{-5mm}\par\noindent 
    \emph{that satisfies the coupling constraint \eqref{eq:cdotcond} for all $e=(i,j)\in\E$. }
\end{definition}

% \newsec{Coupling Relations.}  Consider \eqref{eq:cdotconstraint}, but now for the case of a $\lambda$-CCSub in \eqref{eq:lambdasubsystem}, yielding: 
% \par\vspace{-3mm}{\small
% \begin{align}\label{eq:cdotconstraintwithu}
%     & \dot{c}_e^{\Zbf,\lambda}(x_i,z,u_i) \equiv 0 
%     \\
%     \Leftrightarrow\ 
%     & \Je{(i,j)}(z) 
%     \big( p_i(x_i,z_i) + q_{i}(x_i,z_i) u_i + \breve{q}_{e}(x_i,z_i,z_k) \lambda_{e}  
%     \big) + \notag\\
%     & \Je{(j,i)}(z)
%     \big(p_j(0,z_j) + q_{j}(0,z_j) \uZlambda_j(0,z) +                     \breve{q}_{\bar{e}}(0,z_j,z_l) \lambda_{\bar{e}}  
%     \big) \equiv 0 
%     \notag
% \end{align}}%
% Thus $\lambda_e$ implicitly depends on $u_i$---while the use of the controllers $\uZlambda_j$ eliminates the dependence on the controllers and internal states of the other subsystem. This motivates the following formulation: 

The coupling relation is then summarized in the following:
\vspace{-5mm}
\begin{lemma}\label{lemma:coupling}
    For a CCS $\CC$, if we have 
    \par\vspace{-3mm}{\small 
    \begin{align*}
        \breve{Q}_{e}(x_i,z)  \defeq 
        \begin{bmatrix}
        g_j(0,z_j) & \breve{g}_{\overline{e}}(0, z) \\
        q_j(0,z_j) & \breve{q}_{e}(x_i, z) + \breve{q}_{\overline{e}}(0, z)
        \end{bmatrix}
    \end{align*}}%
    \vspace{-4mm}\par\noindent 
    invertible, there exists a controller $\uZ_j$ that renders $\Zbf_j$ invariant and a coupling relation in \eqref{eq:lambdaeform}, given by: 
    \par\vspace{-4mm}{\small
    \begin{align}
        \begin{bmatrix}
            \uZ_j(0,z; u_i) \vspace{0.5mm}\\
            \lambdaZ_{e}(x_i,z; u_i)
        \end{bmatrix}  =  
        \breve{Q}_e^{-1} %_{e}(x_i,z)
        \left(
            \begin{bmatrix} 0 \\ q_i(x_i,z_i) \end{bmatrix} u_i
            \hspace{-1mm} + \hspace{-1mm}
            \begin{bmatrix}
                -f_j(0,z_j) \\
                p_i(x_i,z_i) - p_j(0,z_j)
            \end{bmatrix} 
        \right) \notag
    \end{align}}\noindent
\end{lemma}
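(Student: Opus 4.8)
The plan is to verify the claimed formula directly by substituting it into the two defining algebraic conditions: the zero-dynamics invariance condition \eqref{eq:zerodyncond} for subsystem $j$, and the coupling-constraint condition \eqref{eq:cdotconstraint} that $\dot c_e \equiv 0$. The invertibility of $\breve Q_e$ is exactly what lets us solve these two conditions simultaneously for the pair $(\uZ_j, \lambdaZ_e)$, so the proof is essentially the statement that this $2\times 2$ block linear system has the indicated unique solution.

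First I would write down, side by side, the two equations that $\uZ_j(0,z;u_i)$ and $\lambdaZ_e(x_i,z;u_i)$ must satisfy. The invariance condition is $0 \equiv f_j(0,z_j) + g_j(0,z_j)\uZ_j(0,z) + \breve g_{\overline e}(0,z)\lambda_{\bar e}$, and using $\lambda_{\bar e} = -\lambda_e$ this reads $g_j(0,z_j)\uZ_j + \breve g_{\overline e}(0,z)(-\lambda_e) = -f_j(0,z_j)$; I'd flip a sign convention carefully here (the block matrix as written has $+\breve g_{\overline e}$, so I would substitute $\lambda_{\bar e}$ directly as the unknown or track the sign, whichever keeps the algebra cleanest). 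The coupling condition comes from \eqref{eq:cdotconstraint}: with $c_e(z) = z_i - z_j$ we have $\Je{(i,j)} = I$ and $\Je{(j,i)} = -I$, so $\dot c_e \equiv 0$ becomes $\dot z_i - \dot z_j \equiv 0$, i.e. $p_i(x_i,z_i) + q_i(x_i,z_i)u_i + \breve q_e(x_i,z)\lambda_e - p_j(0,z_j) - q_j(0,z_j)\uZ_j - \breve q_{\overline e}(0,z)\lambda_{\bar e} \equiv 0$; again using $\lambda_{\bar e}=-\lambda_e$, the $\lambda_e$ terms combine to $(\breve q_e(x_i,z) + \breve q_{\overline e}(0,z))\lambda_e$. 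Here I must be mindful that on the $\lambda$-CCSub the $\dot z_j$ equation uses the zero-dynamics vector field $p_j(0,z_j) + q_j(0,z_j)\uZ_j + \breve q_{\overline e}(0,z)\lambda_{\bar e}$, which is why $p_j$ and $q_j$ appear evaluated at $x_j = 0$.

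Next I would stack these two scalar/vector equations into the single block-matrix equation
\par\vspace{-3mm}{\small
\begin{align*}
\begin{bmatrix}
g_j(0,z_j) & \breve g_{\overline e}(0,z) \\
q_j(0,z_j) & \breve q_e(x_i,z) + \breve q_{\overline e}(0,z)
\end{bmatrix}
\begin{bmatrix} \uZ_j \\ \lambda_{\bar e} \text{ or } \lambda_e \end{bmatrix}
=
\begin{bmatrix} -f_j(0,z_j) \\ p_i(x_i,z_i) - p_j(0,z_j) \end{bmatrix}
+ \begin{bmatrix} 0 \\ -q_i(x_i,z_i) \end{bmatrix} u_i ,
\end{align*}}
recognizing the coefficient matrix as $\breve Q_e(x_i,z)$. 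Since $\breve Q_e$ is assumed invertible, left-multiplying by $\breve Q_e^{-1}$ gives the unique solution, which — after absorbing the sign on the $u_i$ term and on $\lambda$ consistently — is precisely the displayed formula. I would then note that the resulting $\lambdaZ_e(x_i,z;u_i)$ is manifestly affine in $u_i$, with $\AZ_e = \breve Q_e^{-1}\big[\,0 \ ; \ -q_i\,\big]$ restricted to its second block row and $\bZ_e$ the corresponding constant term, so it has the form required by Definition~\ref{def:couplingrelation}; and by construction it satisfies \eqref{eq:cdotcond}. Finally I would observe that $\uZ_j(0,z;u_i)$ so obtained does render $\Zbf_j$ invariant, since it was derived from \eqref{eq:zerodyncond} with $\lambda_{\bar e}$ replaced by this coupling relation.

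The only genuine obstacle is bookkeeping: getting every sign right across the two conventions $\lambda_e = -\lambda_{\bar e}$ and $c_e = -c_{\bar e}$, and making sure the term $\breve q_e(x_i,z) + \breve q_{\overline e}(0,z)$ really emerges with a plus sign (it does, because one contribution enters through $\dot z_i$ with $\lambda_e$ and $+$ Jacobian, the other through $-\dot z_j$ with $\lambda_{\bar e} = -\lambda_e$ and $-$ Jacobian, and the two minus signs cancel). There is no deep analytic content — no need for existence/uniqueness theory for ODEs or implicit function theorem — it is a linear-algebra identity dressed as a lemma, so once the block system is assembled correctly the conclusion is immediate by inverting $\breve Q_e$.
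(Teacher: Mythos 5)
Your proposal is correct and follows essentially the same route as the paper: evaluate the coupling condition $\dot c_e \equiv 0$ on $\Zbf_j$, pair it with the invariance condition \eqref{eq:zerodyncond}, and solve the resulting block linear system by inverting $\breve Q_e$. Your careful tracking of the $\lambda_e = -\lambda_{\bar e}$ sign is warranted — the paper's intermediate equation carries a minus on the $(\breve q_e + \breve q_{\bar e})\lambda_e$ term while the stated $\breve Q_e$ has a plus, so the displayed solution's second block is most naturally read as $\lambda_{\bar e}$ — but this is exactly the bookkeeping you identified, not a gap.
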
% 
\hspace{-12mm}
\begin{proof}
    Evaluating \eqref{eq:cdotcond} along the zero dynamics manifold $\Zbf_j$, i.e., $x_j \equiv 0$, yields: 
    $
    q_i(x_i,z_i)u_i + p_i(x_i,z_i) - p_j(0,z_j) = 
    q_j(0,z_j)\uZ_j(0,z;u_i) -  \left(\breve{q}_{e}(x_i,z) + \breve{q}_{\bar{e}}(0,z) \right) \lambda_e.  
    $
    % \par\vspace{-4mm}{\small
    % \begin{align} \label{eq:symcoupledexplicit2}
    % & q_i(x_i,z_i)u_i + p_i(x_i,z_i) - p_j(0,z_j) =    \notag\\ & \qquad 
    %   q_j(0,z_j)\uZ_j(0,z;u_i) -  \left(\breve{q}_{e}(x_i,z) + \breve{q}_{\bar{e}}(0,z) \right) \lambda_e.
    % \end{align}}%
    % \vspace{-5mm}\par\noindent 
    Combining this with \eqref{eq:zerodyncond} and simultaneously solving for $\uZ_j$ and $\lambdaZ_{e}$ yields the desired result. 
\end{proof}

% \noindent\underline{\textit{Zero dynamics controllers:}}  
Recall that the controller $\uZlambda_j$ that renders the zero dynamics surface invariant implicitly depends on $\lambda_{\overline{e}}$ via \eqref{eq:zerodyncond}. Now with a coupling relation, the dependence of $\lambda_{\overline{e}}$ is removed, and as a result we say that $\uZ_j$ \textit{renders the zero dynamics manifold $\Zbf_j$ invariant} if: 
\par\vspace{-4mm}{\small
\begin{align}\label{eq:zerodynamicswithlambdarelation}
    0 \equiv \fZj(0,z) + \gZj(0,z) u_i + g_j(0,z_j) \left( \uZ_j(0,z;u_i)  - u_i \right)
\end{align}}% 
\vspace{-5mm}\par\noindent 
where $\uZ_j$ is now a function of $u_i$ and 
\par\vspace{-4mm}{\small 
\begin{align}
    \begin{cases} \label{eq:fZjgZj}
        \fZj(x_j,z) & \defeq f_j(x_j,z_j) - %\sum_{e' \in\Graph_j}
        \breve{g}_{\bar{e}}(x_j,z)\bZ_{e}(x_i,z), 
        \\
        \gZj(x_j,z) & \defeq g_j(x_j,z_j) - %\sum_{e' \in\Graph_j}
        \breve{g}_{\bar{e}}(x_j,z)\AZ_{e}(x_i,z).
    \end{cases}
\end{align}} %
\par\vspace{-3.5mm}

% % % % % % % % % % % % % % % % % % % % % % % % % % % % 
%% Uncomment the following to see the derivation:
% \input{Misc/ZDC.tex}
% % % % % % % % % % % % % % % % % % % % % % % % % % % % 

% \noindent\underline{\textit{Coupling conditions:}} 
Returning to \eqref{eq:cdotconstraint}, given a coupling relation we can rewrite this coupling constraint as: 
% As in \eqref{eq:cdotconstraint}, but with the addition of control inputs, we can explicitly write the equations that \eqref{eq:lambdaeform} must satisfy.  Specifically, we can expand \eqref{eq:cdotwlambda} to obtain:
\par\vspace{-4mm}{\small 
\begin{align} \label{eq:cdotconstraintlambda}
    \hspace{-6mm}
    \dot{c}_e(x_i,z)   
    &= \Je{(i,j)}(z) 
        \left( \pZi(x_i,z) +  \qZi(x_i,z) u_i \right)
    \notag\\
    &\hspace{1mm}
    + \Je{(j,i)}(z)
        \left( \pZj(x_i,z) + \qZj(x_i,z) u_i \right) \equiv 0 
\end{align}}%
\vspace{-5mm}\par\noindent
where for the subsystem $\SClambda$ we have 
\par\vspace{-4mm}{\small 
\begin{align} \label{eq:pqZ}
    \begin{cases}
        \pZi(x_i,z) &\defeq p_i(x_i,z_i) + %\sum_{e \in\Graph_i}
        \breve{q}_{e}(x_i,z_i,z_j)\bZ_e(x_i,z) 
        \\
        \qZi(x_i,z) &\defeq q_i(x_i,z_i) +  %\sum_{e \in\Graph_i}
        \breve{q}_{e}(x_i,z_i,z_j)\AZ_e(x_i,z) 
        \\
        \pZj(x_i,z) &\defeq p_j(0,z_j)  + q_{j}(0,z_j) \uZ_j(0,z)  
        \\
        & \hspace{4mm} 
        - \breve{q}_{\bar{e}}(0,z_j,z_i) \bZ_{e}(x_i,z) 
        \\
        \qZj(x_i,z) &\defeq - %\sum_{e' \in\Graph_j}
        \breve{q}_{\bar{e}}(0, z_j, z_i) \AZ_{e}(x_i, z) 
    \end{cases}
\end{align}}%
% \vspace{-7mm}

% % % % % % % % % % % % % % % % % % % % % % % % % % % % % % 
% % % % % % % % % % % % % % % % % % % % % % % % % % % % % % 
\subsection{Isolating Subsystems} \label{sec:isolatingsubsystems}
We now arrive at the key concept for which all of the previous constructions have built --- reducing a CCS to a single subsystem that can be used to give guarantees about the entire CCS. 
% This is achieved using the above coupling relation. 
This is based on the following definition.
\vspace{-1mm}
\begin{definition}\label{def:isolatedsubsystem}
    \emph{
    For a CCS $\CC$, and $i\neq j \in \Q$, assume a coupling relation $\lambdaZ_{e}$ such that there exist $\uZ_j$ rendering the zero dynamics manifold $\Zbf_j$ invariant. Then the \ith \defit{control subsystem (CSub)} associated with the CCS $\CC$ is given by: 
    }
    \par\vspace{-4mm}{\small
    \begin{align}\label{eq:subsystem1}
        \SCZ \defeq 
        \begin{dcases}
            \dot{x}_i  =  \fZi(x_i,z) + \gZi(x_i,z) u_i  \\ 
            \dot{z}_i  =  \pZi(x_i,z) + \qZi(x_i,z) u_i  \\
            \dot{z}_j  =  \pZj(x_i,z) + \qZj(x_i,z) u_i \quad % \forall j \in \Q \setminus \{i\} 
        \end{dcases} 
    \end{align}
    }%
    \vspace{-2mm}\par\noindent
   \emph{where 
    $ \fZi(x_i,z) \defeq f_i(x_i,z_i) + \breve{g}_{e}(x_i,z_i,z_j)\bZ_e(x_i,z) $, 
    $ \gZi(x_i,z) \defeq g_i(x_i,z_i) + \breve{g}_{e}(x_i,z_i,z_j)\AZ_e(x_i,z) $, 
    % where $\fZi, \gZi$ are given by 
    % \par\vspace{-4mm}{\small
    % \begin{align}\begin{cases}
    %     %\label{eq:fZi}
    %         \fZi(x_i,z) & \defeq f_i(x_i,z_i) + %\sum_{e \in\Graph_i}
    %         \breve{g}_{e}(x_i,z_i,z_j)\bZ_e(x_i,z), \\
    %     %\label{eq:gZi}
    %         \gZi(x_i,z) & \defeq g_i(x_i,z_i) + %\sum_{e \in\Graph_i}
    %         \breve{g}_{e}(x_i,z_i,z_j)\AZ_e(x_i,z), 
    % \end{cases}
    % \end{align}}%
    % \vspace{-3mm}\par\noindent 
    and $\pZi, \qZi,$ $\pZj, \qZj$ are given in \eqref{eq:pqZ}. 
    % \eqref{eq:pZi}-\eqref{eq:qZj}. 
    % Solving \eqref{eq:cdotconstraintwithu} accordingly to get $\pZj$ and $\qZj$:
    % \begin{align*} 
    %     % \label{eq:pZj}
    %     & \quad \pZj(x_i,z) \defeq  \\
    %     & \qquad p_j(0,z_j) + q_{j}(0,z_j) \uZ_j(0,z) - %\sum_{e' \in\Graph_j}
    %     \breve{q}_{\bar{e}}(0,z_j,z_i) \bZ_{e}(x_i,z) \notag\\
    %     \label{eq:qZj}
    %     & \quad \qZj(x_i,z) \defeq - %\sum_{e' \in\Graph_j}
    %     \breve{q}_{\bar{e}}(0, z_j, z_i) \AZ_{e}(x_i, z) 
    % \end{align*}
    Furthermore, when a feedback controller $u_i(x_i,z)$ is applied to $\SCZ$, the result is a dynamical system, denoted by $\SDZ$.}
\end{definition}
\vspace{-1mm}

% \newsec{Constrained Dynamics.} 
Note that the coupling constraint \eqref{eq:cdotconstraintlambda} was not explicitly stated in the CSub $\SCZ$. This was because it was solved for via the coupling relation $\lambdaZ_e$. That is, the system naturally evolves on the \defit{constraint manifold}: 
$
    \Cbf \defeq \{ (x,z) \in  \X \times \Z : c_e(z) \equiv 0, ~ \forall ~ e \in \E \}. 
$
% \par\vspace{-4mm}{\small
% \begin{align}
%     \Cbf \defeq \{ (x,z) \in  \X \times \Z ~ : ~ c_e(z) \equiv 0, \quad \forall ~ e \in \E \}. 
% \end{align}}%
% \vspace{-5mm}\par\noindent 
% This is made formal in the following result that is the ultimate result of all the constructions presented in this paper. 
This is made formal in the following result. 
Additionally, it will be seen that solutions of the \ith subsystem, denoted by $(x_i(t),z(t),\lambda(t))$, can be used to construct solutions of the full-order CCS.  
%
% Solutions of this system are denoted by $(x_i(t),z(t),\lambda(t))$. The fact that the \ith subsystem yields behavior of the full system is summarized in the following lemma. 
% 
Before formally stating the ultimate result of this paper, we need some notation. Let $(x_i, z)\in\X_i\times\Z$ and consider the canonical embedding $\iota: \X_i\times\Z \hookrightarrow \X\times\Z$ given by $\iota(x_i,z) = (x,z) $, 
where $x = \{x_i, x_j\}$ and $x_j = 0$.

\begin{theorem}\label{thm:subsystem}
    Let $\CC$ be a CCS, and for the \jth system assume there exist $\uZ_j$ 
    that render the zero dynamics manifold $\Zbf_j$ invariant. Let $\SCZ$ be the corresponding $\lambda$-CCSub for \ith subsystem. Given a feedback controller $u_i(x_i,z)$ for the CSub with corresponding dynamical subsystem $\SDZ$ with solution $(x_i(t),z(t))$ for $t \in I \subset \R$.  If 
    \par\vspace{-4mm}{\small
    \begin{align*}
        \iota(x_i(0),z(0)) \in \Cbf \quad \Rightarrow \quad \iota(x_i(t),z(t)) \in \Cbf \quad \forall ~ t \in I \subset \R
    \end{align*}}%
    \vspace{-5mm}\par\noindent 
    and $(\iota(x_i(t),z(t)),\lambdaZ(t))$,  with 
    \par\vspace{-4mm}{\small
    \begin{align*}
        \lambdaZ(t) = \Big\{\lambdaZ_e\big( 
        x_i(t),z(t);u_i(x_i(t),z(t))
        \big)\Big\}_{e \in \E}
    \end{align*}}%
    \vspace{-5mm}\par\noindent 
    is a solution of $\CD$, the CDS obtained by applying $u_i, \uZ_j$.
\end{theorem}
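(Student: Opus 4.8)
The plan is a direct verification. Take the given solution $(x_i(t),z(t))$ of $\SDZ$, write $z(t)=(z_i(t),z_j(t))$, and form the candidate $\CD$-trajectory $x(t)=\iota(x_i(t),z(t))$ (so that $x_j(t)\equiv 0$), together with $\lambda_e(t)=\lambdaZ_e(x_i(t),z(t);u_i(x_i(t),z(t)))$ and $\lambda_{\bar e}(t)=-\lambda_e(t)$. It then suffices to check that this tuple satisfies each defining relation of $\CD$ in \eqref{eq:controlled_dynamicsc} under the controllers $u_i$ and $\uZ_j$. The relation $\lambda_e=-\lambda_{\bar e}$ holds by construction, so what remains are the four differential equations and the coupling constraint $c_e(z)\equiv 0$.

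First I would dispatch the coupling constraint. By Definition~\ref{def:couplingrelation} the coupling relation satisfies \eqref{eq:cdotcond}, so \eqref{eq:cdotconstraintlambda} holds along the subsystem dynamics; but its left-hand side is exactly $\tfrac{d}{dt}c_e(z(t))=\Je{(i,j)}(z)\dot z_i+\Je{(j,i)}(z)\dot z_j$. Hence $c_e(z(t))$ is constant on $I$, so $\iota(x_i(0),z(0))\in\Cbf$ forces $c_e(z(t))\equiv 0$ and $\iota(x_i(t),z(t))\in\Cbf$ for all $t\in I$. This simultaneously establishes the $\Cbf$-invariance and supplies the coupling constraint needed for $\CD$.

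Next, the two equations of subsystem $i$ are pure bookkeeping: substituting $\fZi=f_i+\breve{g}_e\bZ_e$ and $\gZi=g_i+\breve{g}_e\AZ_e$ into $\dot x_i=\fZi(x_i,z)+\gZi(x_i,z)u_i$ and regrouping, the $u_i$-linear and constant pieces of the $\breve{g}_e$-term assemble into $\breve{g}_e(\AZ_e u_i+\bZ_e)=\breve{g}_e\lambdaZ_e=\breve{g}_e\lambda_e$, leaving $\dot x_i=f_i+g_iu_i+\breve{g}_e\lambda_e=f_i^{\cl}+\breve{g}_e\lambda_e$. The $\dot z_i$ equation follows verbatim with $(p_i,q_i,\breve{q}_e)$ in place of $(f_i,g_i,\breve{g}_e)$ using \eqref{eq:pqZ}, giving $\dot z_i=p_i^{\cl}+\breve{q}_e\lambda_e$. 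These are exactly the first two lines of \eqref{eq:controlled_dynamicsc}.

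For subsystem $j$ everything is evaluated on $x_j\equiv 0$. The $\dot z_j$ equation follows by expanding $\pZj,\qZj$ from \eqref{eq:pqZ}, regrouping into $-\breve{q}_{\bar{e}}(0,z)\lambdaZ_e$, and using $\lambda_{\bar e}=-\lambda_e=-\lambdaZ_e$ to turn this into $+\breve{q}_{\bar{e}}(0,z)\lambda_{\bar e}$, so that $\dot z_j=p_j(0,z_j)+q_j(0,z_j)\uZ_j(0,z)+\breve{q}_{\bar{e}}(0,z)\lambda_{\bar e}=p_j^{\cl}(0,z)+\breve{q}_{\bar{e}}(0,z)\lambda_{\bar e}$, which matches $\CD$ at $x_j=0$. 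The one step that needs care is the $\dot x_j$ equation: since $x_j\equiv 0$, consistency requires the $\CD$ right-hand side $f_j(0,z_j)+g_j(0,z_j)\uZ_j(0,z)+\breve{g}_{\bar{e}}(0,z)\lambda_{\bar e}$ to vanish, and showing this amounts to reconciling the two formulations of ``$\uZ_j$ renders $\Zbf_j$ invariant''---the $\lambda_{\bar e}$-dependent one \eqref{eq:zerodyncond} and the coupling-relation one \eqref{eq:zerodynamicswithlambdarelation}. Expanding \eqref{eq:zerodynamicswithlambdarelation} with $\fZj,\gZj$ from \eqref{eq:fZjgZj}, the $\pm g_j(0,z_j)u_i$ terms cancel, $\bZ_e+\AZ_e u_i$ collapses to $\lambdaZ_e$, and $\lambda_{\bar e}=-\lambdaZ_e$ converts $-\breve{g}_{\bar{e}}\lambdaZ_e$ into $\breve{g}_{\bar{e}}\lambda_{\bar e}$---so \eqref{eq:zerodynamicswithlambdarelation} is precisely the statement that the above right-hand side is zero. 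I expect this reconciliation to be the only genuine obstacle; everything else is mechanical expansion of the barred vector fields followed by reading off $\AZ_e u_i+\bZ_e=\lambdaZ_e=\lambda_e=-\lambda_{\bar e}$. Once all four equations and the constraint are verified, $(\iota(x_i(t),z(t)),\lambdaZ(t))$ is, by definition, a solution of $\CD$.
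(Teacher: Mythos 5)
Your proposal is correct and follows the same two-step structure as the paper's proof: first establishing that $c_e(z(t))\equiv 0$ on $I$ because the coupling relation enforces $\dot c_e\equiv 0$ along the subsystem flow, and then lifting the subsystem solution to the full CDS via invariance of $\Zbf_j$. The only difference is one of exposition---the paper dismisses the second step as following ``trivially'' from zero-dynamics invariance, whereas you explicitly verify all four differential equations and correctly identify that reconciling \eqref{eq:zerodyncond} with \eqref{eq:zerodynamicswithlambdarelation} is the substantive point hiding behind that word.
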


\begin{proof}
    The condition that $(x(0),z(0)) \in \Cbf$ is equivalent to $c_e(z(0)) = 0$. Concretely, $c_e(z_i(0),z_j(0)) = 0$. Since $\lambdaZ_e$ is a coupling relation, it satisfies \eqref{eq:cdotconstraint} and more explicitly \eqref{eq:cdotconstraintlambda}; therefore, and being explicit about the arguments, $\dot{c}_e(x(t),z(t)) = 0$ for all $t \in \mathbf{I}$ and all $e \in \E$.  It follows that $c_e(z(t)) = 0$ for all $t \in \mathbf{I}$ and $e \in \E$. 
    
    The fact that $\big(\iota(x_i(t),z(t)),\lambdaZ(t)\big)$ is a solution of $\CD$ assuming that $(x_i(t),z(t))$ is a solution of $\SDZ$ 
    % follows Lemma \ref{lem:cclambda} together with the construction of $\SDZ$.
    %
    follows trivially from the fact that the zero dynamics $\Zbf_j$ are invariant, i.e., 
    $
    \iota\big( x_i(t),z(t) \big) \in \Zbf_{j}, 
    \ \forall ~ t \in \mathbf{I}. 
    $
    % \par\vspace{-4mm}{\small 
    % \begin{align*}
    %     \iota\big( x_i(t),z(t) \big) \in \Zbf_{j}, \quad \forall ~ t \in I \subset \R 
    % \end{align*}}%
    % for $I$ the interval of existence of $\big( x_i(t),z(t) \big)$. 
\end{proof}

\newsec{Periodic Orbits.}  In the context of quadrupedal dynamics, we will be interested in generating periodic solutions, i.e., walking. A solution of a CDS $\CD$ is \defit{periodic} of period $T > 0$ if for some initial condition $(x(0),z(0),\lambda(0))$:
\par\vspace{-4mm}{\small 
\begin{align*}
    \big( x(t+T),z(t+T),\lambda(t+T) \big) = (x(t),z(t),\lambda(t))
\end{align*}}% 
\vspace{-5mm}\par\noindent 
with the resulting periodic orbit: 
$
\mathcal{O} = \{(x(t),z(t)) \in \X \times \Z ~ | ~ 0 \leq t \leq T \}. 
$
As a result of Theorem \ref{thm:subsystem}, periodic orbits in a subsystem correspond to the periodic orbits in the full-order dynamics.

\begin{corollary}\label{cor:periodic}
    Under the conditions of Theorem \ref{thm:subsystem}, assume that $(x_i(t),z(t))$ is a periodic solution of $\SDZ$ with period $T > 0$ and corresponding orbit $\mathcal{O}_i = \{ (x_i(t),z(t)) \in \X_i \times \Z ~ | ~ 0 \leq t \leq T\}$. Then $(\iota(x(t),z(t),\lambdaZ(t))$ is a periodic solution of the CDS with period $T > 0$ and corresponding periodic orbit $\mathcal{O} = \iota(\mathcal{O}_i)$. 
\end{corollary}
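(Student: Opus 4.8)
The plan is to invoke Theorem~\ref{thm:subsystem} to lift the subsystem solution to a solution of the full-order CDS, and then argue that periodicity is preserved because every map used in the construction is time-invariant.

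First I would note that, under the standing hypotheses inherited from Theorem~\ref{thm:subsystem} (in particular $\iota(x_i(0),z(0)) \in \Cbf$), the periodic solution $(x_i(t),z(t))$ of $\SDZ$ lifts to $\big(\iota(x_i(t),z(t)),\lambdaZ(t)\big)$, which is a solution of $\CD$, where $\lambdaZ(t) = \{\lambdaZ_e(x_i(t),z(t);u_i(x_i(t),z(t)))\}_{e \in \E}$ is built pointwise from the coupling relation and the feedback $u_i$. So the remaining task is purely to check that this lifted trajectory repeats with period $T$.

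Next I would establish periodicity of the lifted trajectory componentwise. The embedding $\iota$ is a fixed map (it appends the zero internal state $x_j = 0$) that does not depend on $t$, so $T$-periodicity of $(x_i(t),z(t))$ immediately gives $\iota(x_i(t+T),z(t+T)) = \iota(x_i(t),z(t))$. For the coupling-input component, I would use that $\lambdaZ_e$ is, by Definition~\ref{def:couplingrelation} and \eqref{eq:lambdaeform}, a fixed algebraic function of $(x_i,z)$ and $u_i$, and that $u_i(x_i,z)$ is a \emph{state} feedback; hence $\lambdaZ(t)$ depends on $t$ only through the $T$-periodic pair $(x_i(t),z(t))$, giving $\lambdaZ(t+T) = \lambdaZ(t)$. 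Combining these, $\big(\iota(x_i(t+T),z(t+T)),\lambdaZ(t+T)\big) = \big(\iota(x_i(t),z(t)),\lambdaZ(t)\big)$, so by definition the lifted trajectory is a periodic solution of $\CD$ of period $T$. The orbit identity then follows by unwinding definitions and using that $\iota$ acts pointwise: $\mathcal{O} = \{\iota(x_i(t),z(t)) : 0 \le t \le T\} = \iota(\{(x_i(t),z(t)) : 0 \le t \le T\}) = \iota(\mathcal{O}_i)$.

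I do not expect a genuine obstacle; the entire argument rests on Theorem~\ref{thm:subsystem} together with the elementary fact that composing a periodic signal with a time-invariant map yields a periodic signal. The only point deserving care is the $\lambda$-component, whose periodicity relies on the coupling relation being a time-invariant function of the states and input and on the controller $u_i$ being a state feedback (not an explicit function of $t$). This is precisely why the paper emphasizes the state-feedback structure of the output \eqref{eq:quadoutput}: a time-varying virtual constraint would break this step.
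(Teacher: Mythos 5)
Your proof is correct and follows the same route as the paper, which states the corollary as an immediate consequence of Theorem~\ref{thm:subsystem} without writing out a proof; you simply make explicit the (correct) observation that the lift $\iota$ and the coupling relation $\lambdaZ_e$ composed with the state feedback $u_i$ are time-invariant maps of the $T$-periodic trajectory, so periodicity is preserved. No gaps.
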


\vspace{-2mm}
% % % % % % % % % % % % % % % % % % % % % % % % % % % % % % % % % % % % % %
% % % % % % % % % % % % % % % % % % % % % % % % % % % % % % % % % % % % % %
\newsec{Application to quadrupeds. } 
For the quadrupedal dynamics $\quadruped$, since the output \eqref{eq:quadoutput} has (vector) relative degree $2$ with respect to $u_i$ (see \cite{Westervelt2007a}), we can explicitly design the controller $\uZlambda_j$ that renders $\Zbf_j$ invariant: 
\par\vspace{-4mm}{\small 
\begin{align*} %\label{eq:uZlambdaquad}
    % \uZlambda_j(0,z) 
    \uZlambda_j
    = (J_{y_i} D_j^{-1}\bar{B}_j)^{-1}\big( 
                J_{y_j} D_j^{-1}\bar{H}_j - \dot{J}_{y_j}\dot{q}_j
                - J_{y_i} D_j^{-1} \bar{J}_e^\top\lambda_e \big),
\end{align*}}%
\vspace{-5mm}\par\noindent
as given by Lemma \ref{lemma:coupling}. Hence, this controller satisfies \eqref{eq:zerodyncond} and renders a $\lambda$-coupled CSub,  as in \eqref{eq:lambdasubsystemsym}. 

For robotic systems, we take these ideas one step further to obtain ``bipeds'' that are the isolated subsystems associated with quadrupeds and include slack variables that are beneficial for gait generation. Operating on the invariant zero dynamics manifold $\mathbf{Z}_j$ yields 
% $y_j(q_j, \alpha_j) \equiv 0 ~\Rightarrow~\theta_a \equiv H_a^{-1}y^d(\xi_j,\alpha_j)$, i.e., 
$y_j(q_j, \alpha_j) \equiv 0$, hence
\par\vspace{-4mm}{\small 
\begin{align*}
\theta_a \equiv H_a^{-1}y^d(\xi_j,\alpha_j) 
    &~\text{and}~
    q_j^\Zbf(\xi_j) \equiv \big( \xi_j^\top, ( H_a^{-1} y^d(\xi_j, \alpha_j) )^\top \big)^\top
    \\
    &\hspace{-22mm}
    \Rightarrow\ \quad \ddot q_j^\Zbf(\xi_j, \dot\xi_j, \ddot\xi_j) 
        = J_\rmz(\xi_j) \ddot \xi_j + \dot J_\rmz(\xi_j,\dot \xi_j) \dot \xi_j.
\end{align*}}% 
\vspace{-5mm}\par\noindent 
where $J_\rmz = \partial q_j^\Zbf(\xi_j) / \partial\xi_j$. In another word, if $\uZlambda_j$ exists and is applied to \jth subsystem, the $j^{\mathrm{th}}$ bipedal dynamics given by in \eqref{eq:ol21}-\eqref{eq:ol22} are equivalent to:
\par\vspace{-4mm}{\small 
\begin{numcases}{ }
    D_j \ddot{q}_j^\Zbf(\xi_j, \dot\xi_j, \ddot\xi_j) + H_j = J_j^\top F_j + B_j u_j^\Zbf + J_e^\top \lambda_e
    \label{eq:bi1}\\
    J_j \ddot{q}_j^\Zbf(\xi_j, \dot\xi_j, \ddot\xi_j) + \dot{J}_j \dot{q}_j^\Zbf(\xi_j, \dot\xi_j) = 0
    \label{eq:bi2}
\end{numcases}}\noindent
\vspace{-3mm}\par\noindent 
where for simplicity we have suppressed the dependencies of  $D_j(q_j(\xi_j)), J_j(q_j(\xi_j))$ and $H_j(q_j(\xi_j), \dot{q}_j(\xi_j,\dot{\xi}_j))$. 
We then leverage a specific structure of rigid-body dynamics when using the floating base convention: 
$ B_j u_j + J_{\bar{e}}^\top \lambda_e = (\lambda_{\bar{e}}^\top, u_j^\top)^\top$. 
% \footnote{This is true and can be extended to centipede or spider type robots.}  
Utilizing this, \eqref{eq:bi2} and the first 6 rows of \eqref{eq:bi1} yield the following ``bipedal'' dynamics:
\par\vspace{-4mm}{\small 
\begin{align}
    \bipedz{j} \defeq
    \begin{cases}
        D^\Zbf_j \ddot\xi_j + H^\Zbf_j = \hat{J}_j^\top F_j + \lambda_e\\
        J_j^\Zbf \ddot\xi_j + w_j^\Zbf = 0 
    \end{cases}
    \label{eq:bizieom}
\end{align}}%
\vspace{-3mm}\par\noindent 
with 
$
    D^\Zbf_j  = \hat{D}_j J_\rmz, 
    H^\Zbf_j  = \hat{D}_j \dot{J}_\rmz\dot\xi_j + \hat{H}_j, 
    J_j^\Zbf  = J_j J_\rmz,  
$
and
$  w_j^\Zbf  = J_j\dot{J}_\rmz\dot\xi_j + \dot{J}_j \dot{J}_\rmz \dot{\xi}_i $. 
Here,  
$\hat{\square}$ are the first 6 rows (block) of the variable $\square$. 
Hence, $\bipedzj$ represents the dynamics of a subsystem $j$ on $\Zbf_j$, i.e., \eqref{eq:bizieom} evolves according to \eqref{eq:zerodyncond} by adding a slack variable $F_j$ that can be uniquely determined.

\section{Coupled System Optimization} \label{sec:opt}

\newcommand{\id}{{\kappa}} % grid index
\newcommand{\vi}{\vartheta^\id} 
\newcommand{\xv}{\chi} % := (x_i,z_i,z_j)

With the previous construction of \systems, we present a general optimization framework to solve for the solution of the \ith CSub in \eqref{eq:subsystem1} associated with the CCS, while synthesising the controllers that render forward invariance of the zero dynamics manifolds. 
% The application of these ideas to periodic orbit generation will be discussed. 
%
% The approach we will take is a \textit{locally direct collocation} based optimization method \cite{Rao2009survey} dynamical systems that can be described by differential equations. 
%which has been widely applied to finding numerical solutions to ODE and PDE systems. The work \cite{hereid_dynamic_TRO} formulated a variation of this method to efficiently synthesis trajectories and controllers for \textit{hybrid control systems} \cite{ma2017bipedal,reher2016realizing}. 
% In this work, we build on this method to solve for the solution and controllers for \systems. The target is that through the proposed control decoupling framework, the computation of periodic solutions (including control synthesis) to a coupled control system can be achieved more efficiently and robustly.
%
%
The approach we will take is a \textit{locally direct collocation} based optimization method \cite{hereid_dynamic_TRO}, which has been widely applied to finding solutions to dynamical systems such as 
\cite{reher2016realizing}. 
% \cite{ma2017bipedal,reher2016realizing}. 
We now pose the previous formulations as a series of constraints to represent the controlled dynamics of $\SCZ$. % for system $i\in\V$. 
Along this process, the problem formulation of our target application --- the control of quadrupedal walking, will be used as an example to illustrate this method.

\newsec{Optimization setup.} 
% \newsec{Decision Variables.} 
We first discretized the time horizon $t\in[0,T]$ evenly to obtain the grid indices $\id = 0,1,...\rmK$, i.e., $t^\id = T \id/\rmK$. % Consequently, we have the initial time $t^0=0$, final time $t^\rmK = T$ and step size $\Delta t=T/\rmK$. 
We define the \textit{decision variable} associated with the \ith control subsystem $\SCZ$ as: 
\par\vspace{-4mm}{\small 
\begin{align*}
    \mathbf{X} \defeq \big\{ \vartheta^\id \big\}_{\id = 0,1,...K},
    \quad 
    % \\ \text{with} \quad
    \vartheta^\id \defeq \{x_i^\id, \dot x_i^\id, z_i^\id, \dot z_i^\id,
    z_j^\id, \dot z_j^\id, u_i^\id, u^{\Zbf,\id}_j \}
\end{align*}}%
\vspace{-6mm}\par\noindent 
% Note that we will use $z^{\id} = \{z_i^\id,  z_j^\id\}$ to denote all of the coupled coordinates. 
Note that we abbreviated the dependency on time $t$ as $\square^\id \defeq \square(t^\id)$ for notational simplicity. % of notation, where $\square$ represents a decision variable.  

%\newsec{Zero Dynamics Constraints.} 
Recall that given a coupling relation, we have associated zero dynamics invariance conditions given by \eqref{eq:zerodynamicswithlambdarelation}. We will enforce these conditions in the optimization to ensure that $u^{\Zbf,\id}_j$ renders $\Zbf_j$ invariant as:
%. Specifically, we have the constraint: 
\par\vspace{-4mm}{\small
\begin{align*} %\label{eq:zerodynamicswithlambdarelationopt}
    & F_{\mathrm{zero}}(\vartheta^\id)  \defeq  
    \fZj(0,z^\id) + \gZj(0,z^\id) u_i^\id + g_j(0,z_j^\id) 
    \left( u^{\Zbf,\id}_j - u_i^\id \right),
    % ~\forall ~ j \in \Q \setminus \{i\}  
\end{align*}}%
\vspace{-5mm}\par\noindent 
where $\fZj$ and $\gZj$ are given as in \eqref{eq:fZjgZj}.

% \textcolor{red}{Wenlong: Note how I am constantly referring back to the constructions in the previous section.  This creates a sense of continuity and makes it clear how the previous constructions are useful in defining the optimization. }
%% Changed my mind on this \textcolor{red}{Wenlong: What you can do is develop each of the constraints in a paragraph, like I did above.  Then just state the optimization subject to these constraints.}

% \newsec{Subsystem Dynamic Constraints.} 
Next, following from the constructions in Sec.\ref{sec:isolatingsubsystems}, we define constraints corresponding to the dynamics of the \ith control subsystem $\SCZ$ (as obtained from the coupling relation). Denote 
$\chi^\id = (x_i^\id, z_i^\id, z_j^\id)$ and
\par\vspace{-4mm}{\small 
\begin{align*} %\label{eq:subsystem}
    F(\xv^\id, u_i^\id) & \defeq   
    % \SCZ \vert_{x=x_i^\id, z=z_i^\id}
    \begin{cases}
         \fZi(x_i^\id,z^\id) + \gZi(x_i^\id,z^\id) u_i^\id  \\ 
         \pZi(x_i^\id,z^\id) + \qZi(x_i^\id,z^\id) u_i^\id  \\
         \pZj(x_i^\id,z^\id) + \qZj(x_i,z) u_i^\id     %\quad \forall~j \in \Q \setminus \{i\} 
    \end{cases}
\end{align*}}%
\vspace{-3mm}\par\noindent
to obtain the \textit{dynamic constraints} as
\par\vspace{-4mm}{\small 
\begin{align} \label{eq:subsystemdyncons}
    F_{\mathrm{dyn}}(\vartheta^\id) \defeq \dot\xv^\id - F(\xv^\id, u_i^\id) = 0, 
    \tag{C.2}
    % & = F(x_i^\id, \dot x_i^\id, z_i^\id, \dot z_i^\id,  z_j^\id, \dot z_j^\id, u_i^\id )    \\
    % & \defeq   \begin{dcases}
    %     \dot{x}_i^\id  -  \fZi(x_i^\id,z^\id) - \gZi(x_i^\id,z^\id) u_i^\id  \\ 
    %     \dot{z}_i^\id  -  \pZi(x_i^\id,z^\id) - \qZi(x_i^\id,z^\id) u_i^\id  \\
    %     \dot{z}_j^\id  -  \pZj(x_i^\id,z^\id) - \qZj(x_i,z) u_i^\id     \quad \forall~j \in \Q \setminus \{i\} 
    % \end{dcases} \notag 
\end{align}}%
\vspace{-5mm}\par\noindent 
which is an equality constraint imposed on the $\id^{\mathrm{th}}$ node to enforce all of the states and controllers satisfy the dynamics in \eqref{eq:subsystem1}. Further, to guarantee that those local solutions satisfying \eqref{eq:subsystemdyncons} stay on the same vector flow, i.e., belong to one unique solution, we employ an implicit stage-$3$ Runge-Kutta method for formulating this objective as an equality constraint. 
Concretely, we use % a piecewise Hermite interpolation (cubic) polynomial 
Hermite interpolation to compute the interpolated value of $\xv_c^\id$ and its slope $\dot\xv_c^\id$ (see equation (30) of \cite{hereid_dynamic_TRO}) at the center of the subinterval $[t^\id,t^{\id+1}]$. 
% \par\vspace{-4mm}{\small 
% \begin{align*}
%     \bar{\xv}_c^\id &= 
%     (\xv^\id + \xv^{\id+1})/2 + \Delta t (\dot{\xv}^\id - \dot{\xv}^{\id+1})/8 \\
%     \dot{\bar{\xv}}_c^\id &= 
%     -3(\xv^\id - \xv^{\id+1})/2 \Delta t - (\dot{\xv}^\id + \dot{\xv}^{\id+1})/4.
% \end{align*}}%
% \vspace{-5mm}\par\noindent 
Then the \defit{collocation constraints} are formed as: 
\par\vspace{-4mm}{\small 
\begin{align}
    d(\xv^\id, \xv^{\id+1}, u^\id_i) \defeq 
    \dot{\xv}^\id_c - F(\xv^\id_c, u^\id_i) = 0
    \tag{C.3}
\end{align}}%
% See \cite{hereid_dynamic_TRO} for more details.
\vspace{-6mm}\par\noindent

\newsec{Physical Constraints \& Periodic Constraints.} 
A set of inequality constraints (\textit{path constraints}) $ p(\vartheta^\id)\geq 0 $ are used to enforce conditions along the time horizon. For robotics, these are widely applied as obstacle avoidance condition, %for a manipulation task,
and some feasibility conditions for the dynamical system, representing real-world physics. 
% (e.g., friction cone condition). 
In our application --- the walking dynamics of quadrupeds, the inequality constraints are used to define the friction cone condition and maximum ground clearance of the swing foot to be higher than 8 cm.

In addition, a set of equality constraints are imposed on the decision variables at $t=0,T$ to ``connect'' the initial and final condition: 
$b(\xv^0, \xv^\rmK) = 0,$
% \par\vspace{-4mm}{\small
% \begin{align} \label{eq:periodiccod}
%     b(\mathbf{X})\defeq 
%     % \begin{bmatrix} \xv^0 - \xv^\rmK \\ \dot\xv^0 - \dot\xv^\rmK\end{bmatrix}
%     \xv^0 - P\xv^\rmK = 0
% \end{align}}% 
% \vspace{-5mm}\par\noindent 
so that the optimal solution of the optimization is a periodic solution of the dynamical system. 
Particularly, the dynamics of quadrupedal locomotion include both continuous and discrete dynamics, forming a \textit{hybrid control system}. To find a periodic solution (ambling motion), we have the periodic constraint as: 
\par\vspace{-4mm}{\small 
\begin{align}
    b(q_i^0, \dot{q}_i^0, q_i^\rmK, \dot{q}_i^\rmK) = 
    \begin{bmatrix} \Delta(q^\rmK_i)\hspace{0.4mm} \dot{q}^\rmK_i -  \dot{q}_i^0 
    \vspace{0.5mm}\\ q_i^{\hspace{0.4mm}\rmK} - q_i^0  \end{bmatrix} =0
    \tag{C.6}
\end{align}
}%
\vspace{-4mm}\par\noindent 
where $\Delta(\cdot)$ represents the plastic impact dynamics that maps the pre-impact velocity $\dot q_i^\rmK$ to its post-impact term. 
% With this constraint, the optimal solution became a solution of the hybrid robotic dynamic systems that is a hybrid periodic orbit. 

\begin{figure*}[t!]
\vspace{1.3mm}
	\centering
	    \includegraphics[width=0.98\textwidth]{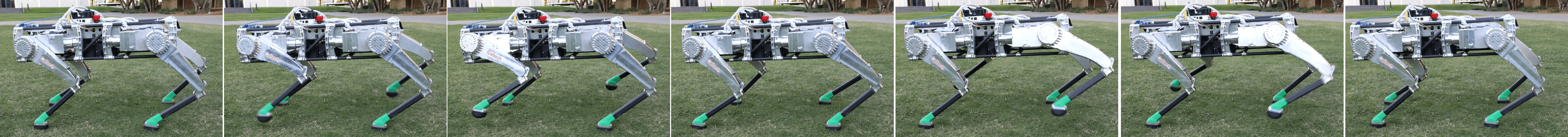} 
	    \vspace{1mm}\\
		\includegraphics[width=0.153\textwidth]{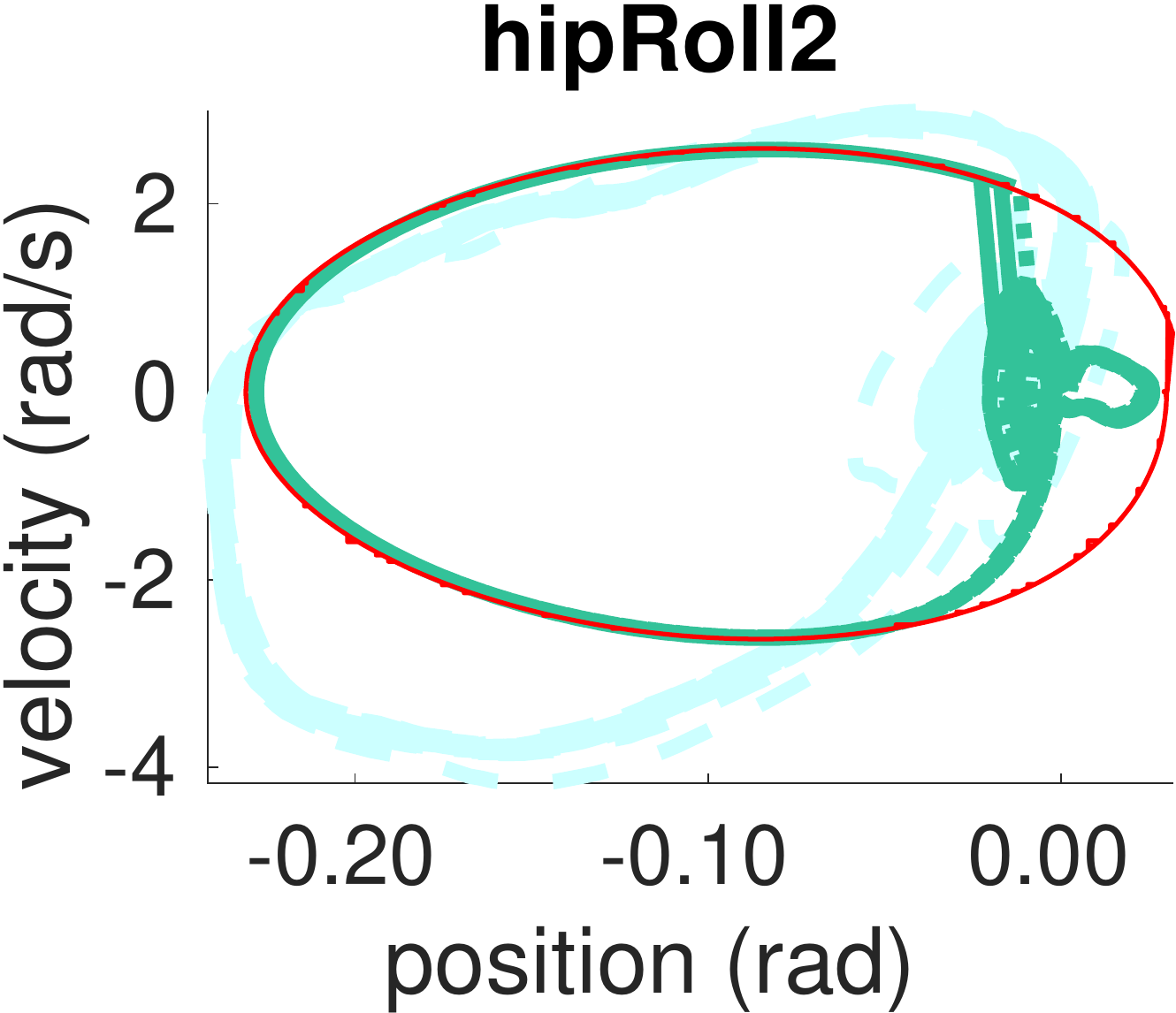}
		\includegraphics[width=0.153\textwidth]{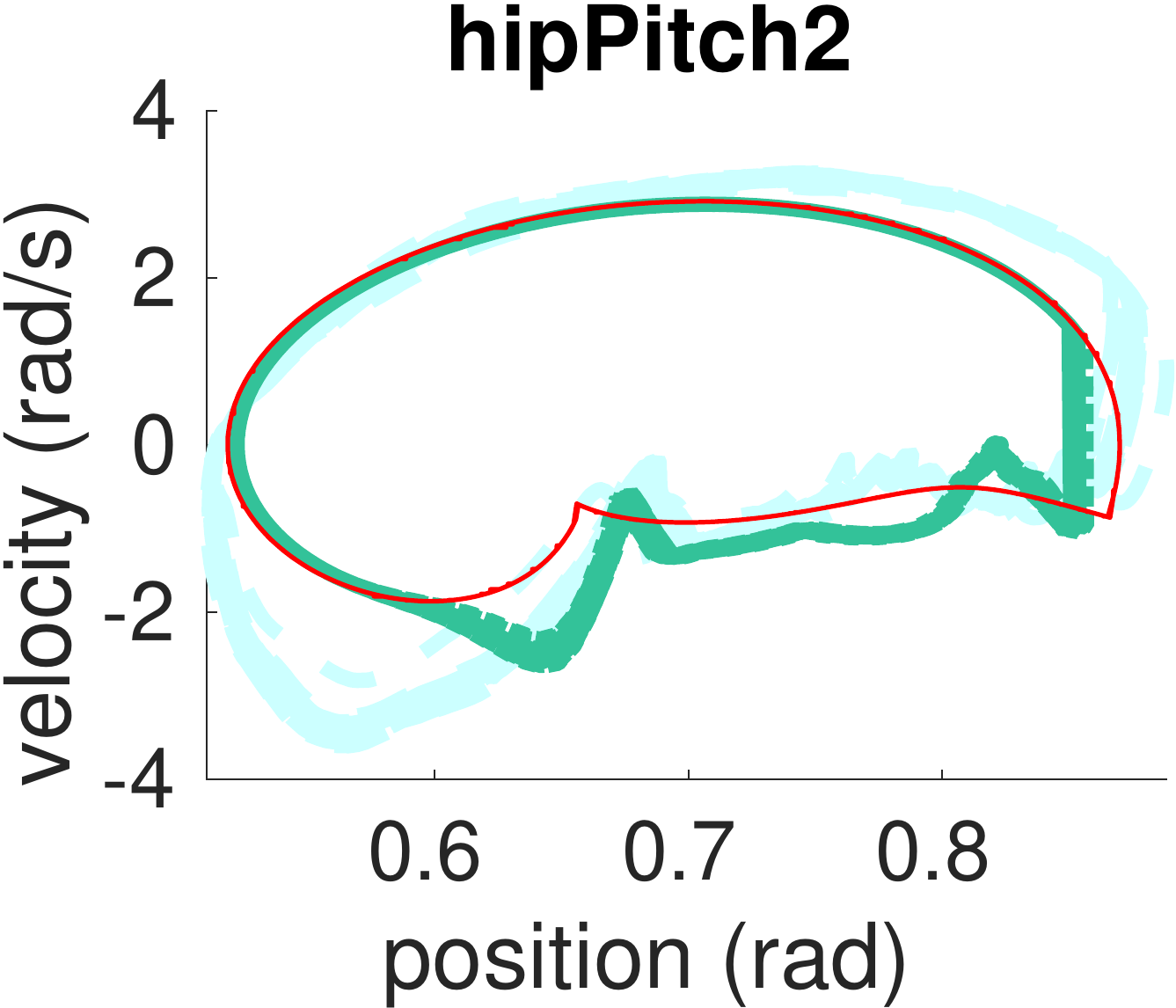}
		\includegraphics[width=0.153\textwidth]{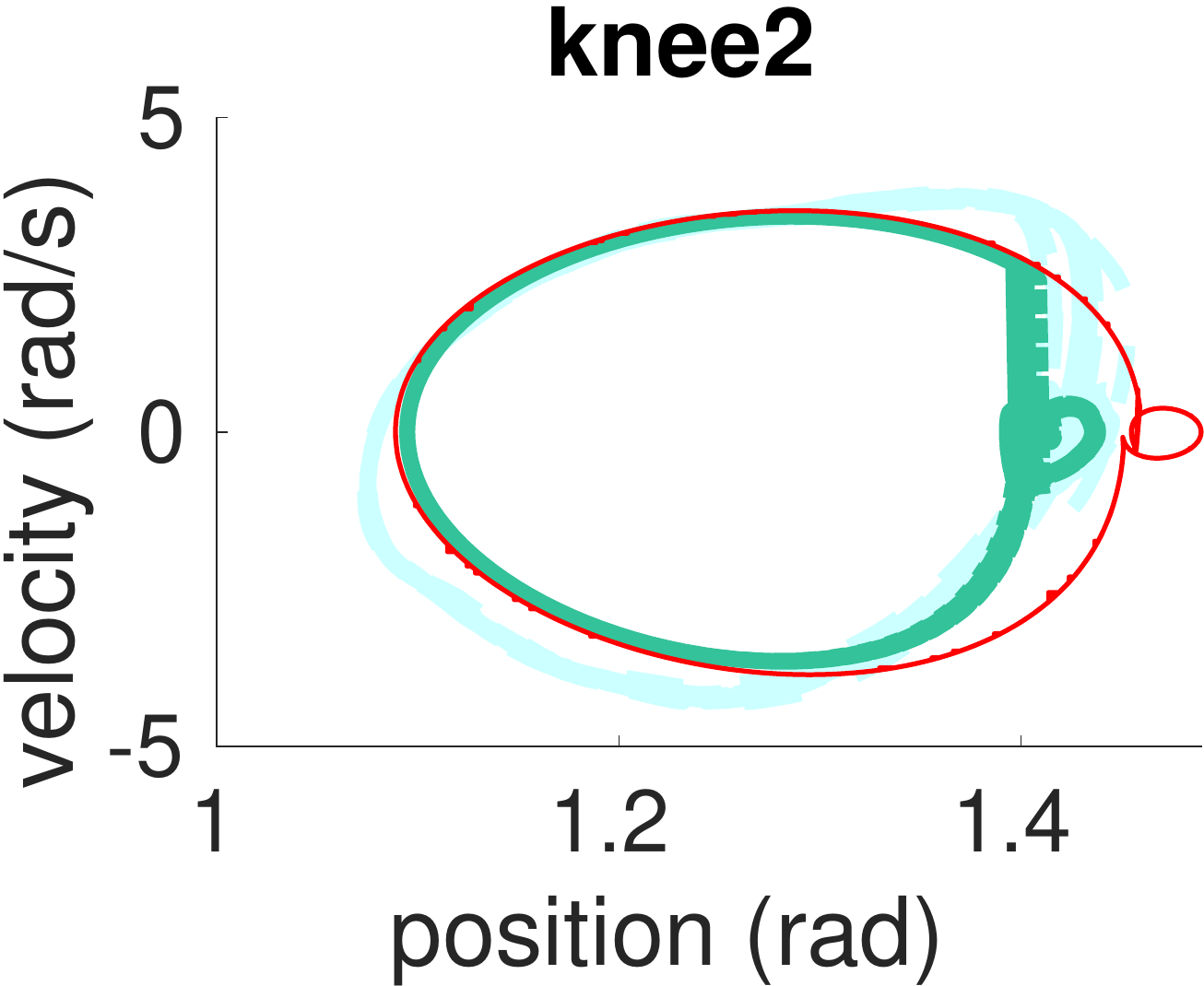}
		\quad
		\includegraphics[width=0.153\textwidth]{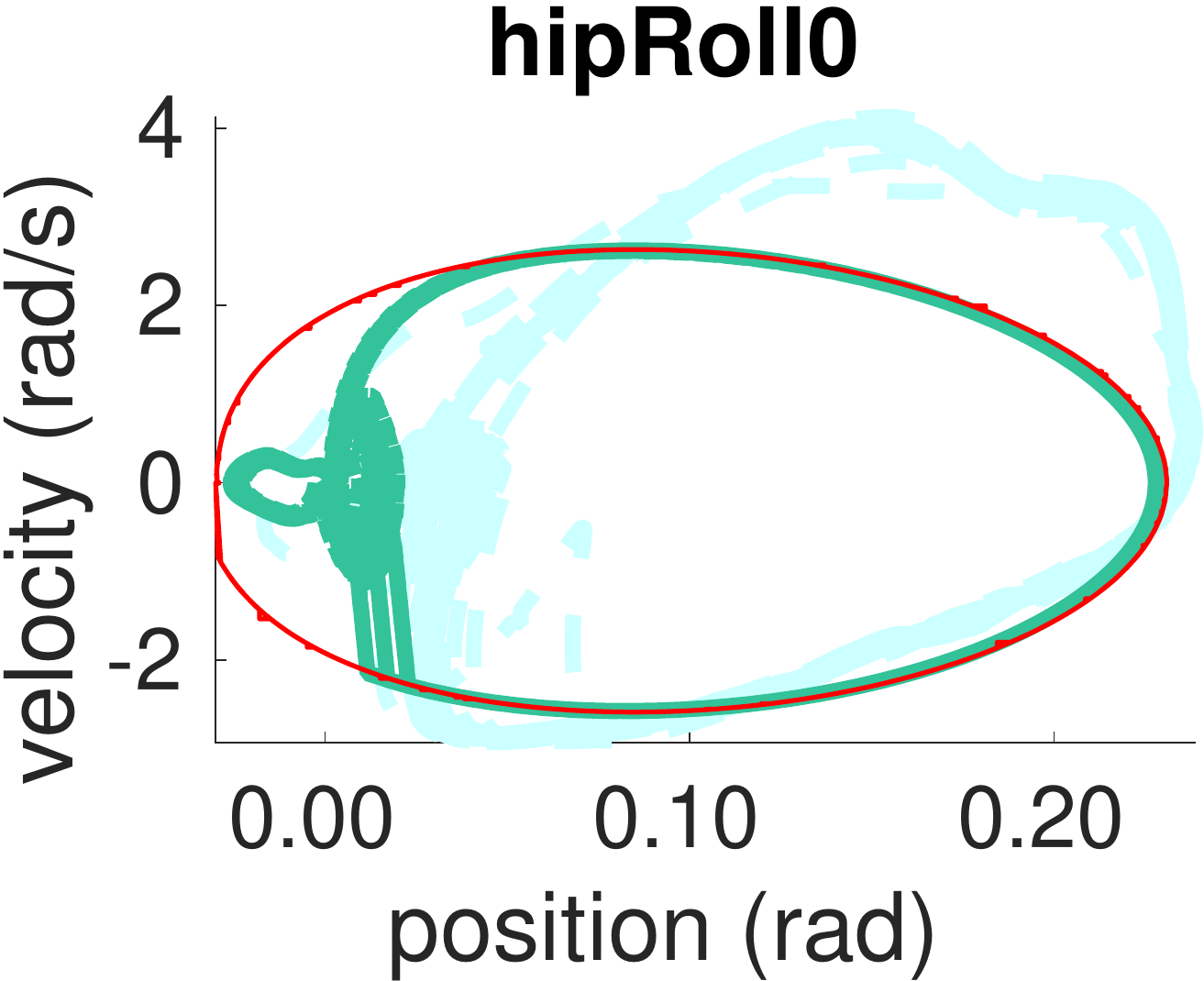}
		\includegraphics[width=0.153\textwidth]{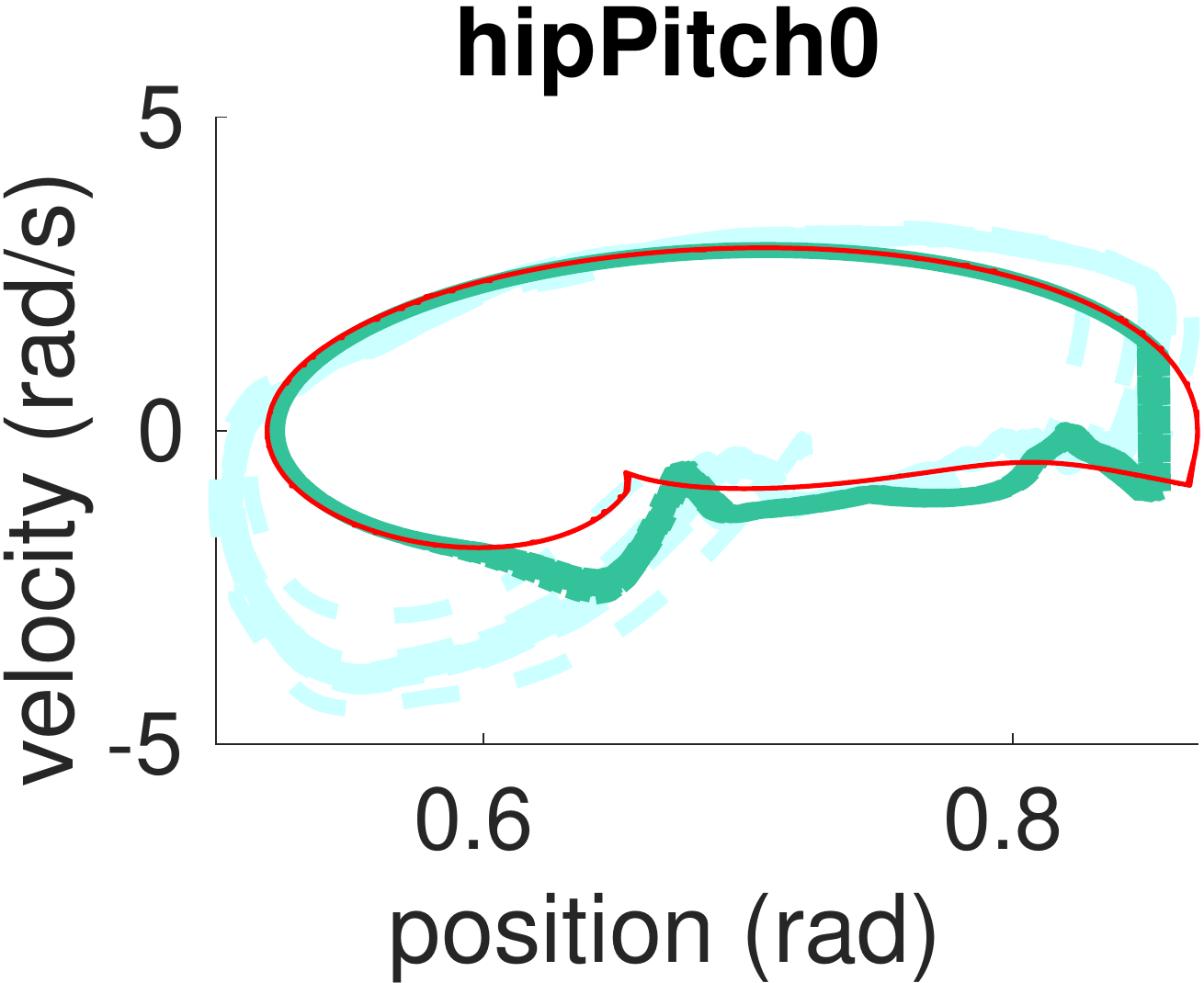}
		\includegraphics[width=0.153\textwidth]{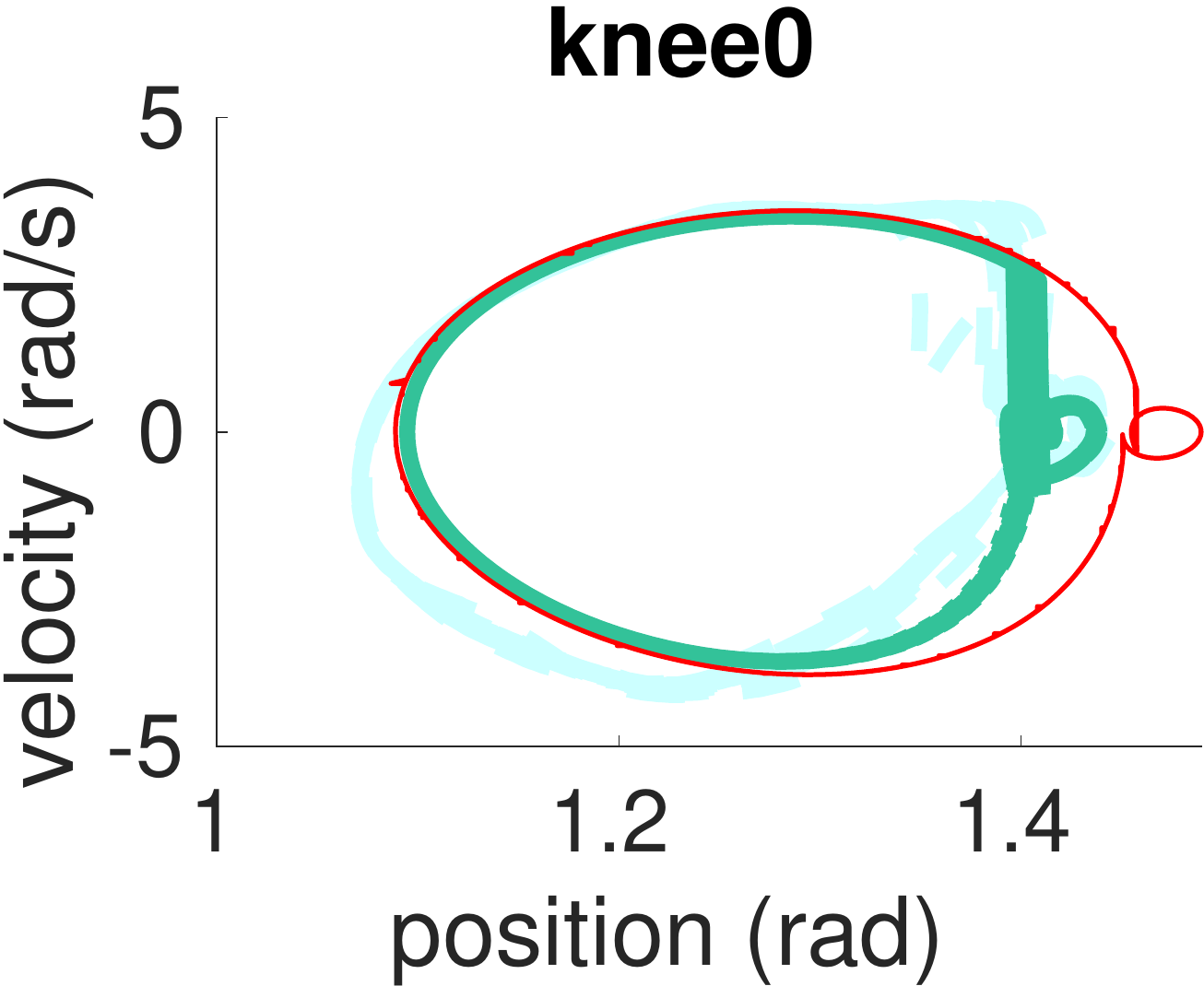}
	\vspace{-2.3mm}
    \caption{
    \textit{Top:} Snapshots showing a full step of the ambling gait in an outdoor lawn. 
    \textit{Bottom:} The periodic trajectory produced by optimization \eqref{eq:generalOPT} (in red) vs. the experimental tracking data (in cyan) vs. RaiSim simulation data (in green) in the form of phase portrait (limit cycle) using 18 seconds' data. 
    }
	\label{fig:walk2exp} % \label{fig:tiles}
	\vspace{-5mm}
\end{figure*}

\newsec{Optimization problem.} To find the periodic solution of dynamical system \eqref{eq:subsystem1}, we now parse this coupled controlling problem of the isolating \ith subsystem as:  
\par\vspace{-4.5mm}{\small
\begin{align} \label{eq:generalOPT}
    \underset{\mathbf{X}}{\operatorname{argmin}}\ \ &\Phi( \mathbf{X} ) & \tag{NLP}\\
    \text{s.t.\ \ }
    & F_{\text{zero}}(\vartheta^\id) = 0        &\id=0,1\ldots,\rmK   \tag{C.1}\\
    & F_{\text{dyn}}(\vi)= 0                    &\id=0,1\ldots,\rmK   \tag{C.2}\\
    & d(\xv^\id, \xv^{\id+1}, u^\id_i) = 0      &\id=0,1\ldots,\rmK-1 \tag{C.3}\\
    & \vartheta^\id \in \X\times\Z\times\Us     &\id=0,1\ldots,\rmK   \tag{C.4}\\
    & p(\vartheta^\id) \geq 0                   &\id=0,1\ldots,\rmK   \tag{C.5}\\
    & b(\mathbf{X}) = 0                         &                     \tag{C.6}
\end{align}}%
\vspace{-6mm}\par\noindent 
where $\Phi(\cdot)\in\R$ is the cost function. 
%, often picked as the total energy consumption, energy efficiency or some other metric. 
Here, we pick the cost function as the acceleration of the torso orientation to yield a less energetic motion for the ease of experiments. 
(C.4) defines the upper and lower bounds of the decision variables, i.e., that they live in the admissible space of values. In the application of walking, this was used to define the feasible configuration space and  the actuator torque less than 40N$\cdot$m. The other constraints are as stated as above.

\newsec{Solutions. } As a result, the optimization \eqref{eq:generalOPT} can simultaneously produce trajectories (solutions) of the states $\{ x_i(t), z(t)\}$, $u^{\Zbf}_j(t)$ that renders the zero dynamics manifold $\Zbf_j$ invariant and the open-loop controller $u_i(t)$,  $\forall t\in[0,T]$ for which these solutions are defined. 
Note that one can also enforce the dynamics $\dot x_i^\id + \varepsilon x_i^\id=0$ with $\varepsilon>0$ to guarantee the converging attribute of the \ith isolating subsystem, in which case the controller $u_i(x_i,z)$ is equivalently an input-output feedback linearization controller. 
Per Theorem \ref{thm:subsystem}, given $u^{\Zbf}_j$ that renders invariant $\Zbf_j$ and the feedback controller $u_i(x_i,z)$, we can compute $\lambdaZ(t)$ using \eqref{eq:lambdaeform}, hence $(\iota(x_i(t), z(t)),\lambdaZ(t))$ is a solution of the original CDS. 
Further, by imposing the periodic condition on the solution's boundary condition, the optimization produced a periodic solution of period $T$ to the CCS. Therefore, according to Corollary \ref{cor:periodic}, $(\iota(x_i(t),z(t),\lambdaZ(t))$ is a periodic solution of the CDS with period $T$.

\newsec{Application to quadrupeds.} 
When posing the control problem of quadrupeds, we leverage the subsystems representing the front and rear bipeds: $\bipedz{\rmf}$ and $\bipedz{\rmr}$, as given in \eqref{eq:bizieom}. Note that these subsystems are still coupled through $\lambda$---while this could be explicitly solved for via Lemma \ref{lemma:coupling},  %\eqref{eq:lambdaForce}, 
we keep it implicit due to the complexity of inverting the mass-inertia matrix for this particular robotic application. 
The \ith subsystem yield (C.1), (C.2) and (C.3) for \eqref{eq:generalOPT}. Specifically for all of the grid indices  $\id=0,1,...5$, we have the decision variables:  $\vi = \{q_\rmf^\id, \dot{q}_\rmf^\id, \xi_\rmr^\id, \dot\xi_\rmr^\id, u_\rmf^\id, F_\rmf^\id, F_\rmr^\id,  \alpha_\rmf,\\ \lambda_e^\id \}$. 
Finally, the optimization converged to a periodic solution of the isolated bipedal system, which can then be reconstructed to obtain the ambling motion of the quadrupedal robot (shown in \figref{fig:walk2exp}) according to Theorem 1. We report that the computation took 17.6s and 295 iterations of searching. Comparing to the traditional full-model based approaches \cite{ma2019First}, whose fastest record was 42s, the proposed method is $58\%$ faster.

To validate the proposed periodic orbit generation method using  \systems, we conducted experiments in indoor and outdoor environments, as well as in a physics engine ---RaiSim. 
% \cite{raisim}. 
The implemented controller is a time-based PD approximation of input-output linearizing controllers to track the the desired outputs (represented by $\alpha_\rmf, \alpha_\rmr = \mathcal{M}\alpha_\rmf$): 
% Specifically, for $i = \rmf, \rmr$, we utilize the controller:
\par\vspace{-5mm}{\small \begin{align*} %\label{eq:PD}
    u_i(q_i, \dot{q}_i, t) =  
        -k_p\big(\dot y^a(q_i) - y^d_t(t, \alpha_i) \big) - k_d\big(y^a(q_i) - \dot y^d_t(t, \alpha_i) \big)  
\end{align*}}%
\vspace{-5mm}\par\noindent 
with $k_p, k_d$ the PD gains. 
% Similar implementation and its theoretical justification can be found in \cite{Kolathaya2018Durus}.
% In addition, the switching detection and the event functions .are also given by the optimized trajectories, meaning the walking controller will switch to next step when $t=T$. 
The result is successful ambling in simulation, indoor (research lab) environment and outdoor rough terrains. See \cite{video} for the video and \figref{fig:walk2exp} for walking tiles and a comparison for the logged data with the optimized trajectory generated from \eqref{eq:generalOPT}. Importantly, we note that the averaged absolute torque inputs are 9.47, 6.45, 17.56 N$\cdot$m for the hip roll, hip pitch and knee motors, all of which are within the hardware limitations. Thus we are able to translate the theoretic results on gait generation for \systems\ to hardware in a physically realizable fashion.

\section{Conclusion} \label{sec:conclusion}

As inspired by robotic systems, this paper presented the new formulation of \systems: control systems that are connected via coupling relations and coupling inputs. We demonstrated how these systems can be reduced to a single subsystem that encodes the behavior of the full-order coupled system; this was achieved through leveraging zero dynamics and coupling relations. 
The main result of this paper was that solutions for these isolated subsystems are solutions for the full-order systems. 
Building on this, we constructed a nonlinear optimization problem on only a given subsystem that yields periodic orbits for the full-order dynamics.  
Finally, the application of these ideas were considered for \systems\ from which a specific example includes quadrupeds. 
%
% Specifically, we illustrated how virtual constraints can be used to transform a quadruped into the form of a \system. Applying the results from this paper, a corresponding isolated subsystem (representing a bipedal robot) was used to generate walking gaits for the full-order quadruped.  
%
This was demonstrated through experiments on hardware. An important future direction of this work is to expand the \system\ related concepts to system with more than two subsystems. % which have different dimensions. 

%%% Lyapunov analysis of coupled dynamical systems
%%% \input{sections/theory}

%%%%%%%%%%%%%%%%%%%%%%%%%%%%%%%%%%%%%%%%%%%%%%%%%%%%%%%%%%%%%%%%%%%%%%%%%%%%%%%%
% \section{ACKNOWLEDGMENTS}
% The authors gratefully acknowledge the contribution of National Research Organization and reviewers' comments.

%%%%%%%%%%%%%%%%%%%%%%%%%%%%%%%%%%%%%%%%%%%%%%%%%%%%%%%%%%%%%%%%%%%%%%%%%%%%%%%%
\bibliographystyle{abbrv}
\bibliography{cite}

\begin{thebibliography}{10}

\bibitem{video}
Experimental video. \url{https://youtu.be/GlpgSXMinoU}.

\bibitem{ames14CLF}
A.~Ames, K.~Galloway, K.~Sreenath, and J.~Grizzle.
\newblock Rapidly exponentially stabilizing control lyapunov functions and
  hybrid zero dynamics.
\newblock {\em Automatic Control, IEEE Transactions on}, 59(4):876--891, 2014.

\bibitem{antonelli2013interconnected}
G.~Antonelli.
\newblock Interconnected dynamic systems: An overview on distributed control.
\newblock {\em IEEE Control Systems Magazine}, 33(1), 2013.

\bibitem{chung2009cooperative}
S.-J. Chung and J.-J.~E. Slotine.
\newblock Cooperative robot control and concurrent synchronization of
  {L}agrangian systems.
\newblock {\em IEEE transactions on Robotics}, 25(3):686--700, 2009.

\bibitem{featherstone2014rigid}
R.~Featherstone.
\newblock {\em Rigid body dynamics algorithms}.
\newblock Springer, 2014.

\bibitem{ganesh2007composition}
S.~Ganesh, A.~D. Ames, and R.~Bajcsy.
\newblock Composition of dynamical systems for estimation of human body
  dynamics.
\newblock In {\em International Workshop on Hybrid Systems: Computation and
  Control}, pages 702--705. Springer, 2007.

\bibitem{Grizzle2014Models}
J.~W. Grizzle, C.~Chevallereau, R.~W. Sinnet, and A.~D. Ames.
\newblock Models, feedback control, and open problems of {3D} bipedal robotic
  walking.
\newblock {\em Automatica}, 50(8):1955 -- 1988, 2014.

\bibitem{hamed2019hierarchical}
K.~A. Hamed, V.~R. Kamidi, W.-L. Ma, A.~Leonessa, and A.~D. Ames.
\newblock Hierarchical and safe motion control for cooperative locomotion of
  robotic guide dogs and humans: A hybrid systems approach.
\newblock {\em arXiv preprint arXiv:1904.03158}, 2019.

\bibitem{hereid_dynamic_TRO}
A.~Hereid, C.~M. Hubicki, E.~A. Cousineau, and A.~D. Ames.
\newblock Dynamic humanoid locomotion: A scalable formulation for {HZD} gait
  optimization.
\newblock {\em IEEE Transactions on Robotics}, pages 1--18, 2018.

\bibitem{ma2019First}
W.-L. Ma, K.~Akbari~Hamed, and A.~D. Ames.
\newblock First steps towards full model based motion planning and control of
  quadrupeds: A hybrid zero dynamics approach.
\newblock In {\em 2019 IEEE International Conference on Intelligent Robots and
  Systems (IROS)}, Macau, China, 2019.

\bibitem{ma2019bipedal}
W.-L. Ma and A.~D. Ames.
\newblock From bipedal walking to quadrupedal locomotion: Full-body dynamics
  decomposition for rapid gait generation.
\newblock {\em {arXiv preprint:1909.08560}}, 2019.

\bibitem{mesbahi2010graph}
M.~Mesbahi and M.~Egerstedt.
\newblock {\em Graph theoretic methods in multiagent networks}, volume~33.
\newblock Princeton University Press, 2010.

\bibitem{Murray1994mathematical}
R.~M. Murray, Z.~Li, S.~S. Sastry, and S.~S. Sastry.
\newblock {\em A mathematical introduction to robotic manipulation}.
\newblock CRC press, 1994.

\bibitem{reher2016realizing}
J.~Reher, E.~A. Cousineau, A.~Hereid, C.~M. Hubicki, and A.~D. Ames.
\newblock Realizing dynamic and efficient bipedal locomotion on the humanoid
  robot {DURUS}.
\newblock In {\em IEEE International Conference on Robotics and Automation
  (ICRA)}, 2016.

\bibitem{ren2005survey}
W.~Ren, R.~W. Beard, and E.~M. Atkins.
\newblock A survey of consensus problems in multi-agent coordination.
\newblock In {\em Proceedings of the 2005, American Control Conference, 2005.},
  pages 1859--1864. IEEE, 2005.

\bibitem{Sastry1999Nonlinear}
S.~Sastry.
\newblock {\em Nonlinear systems: analysis, stability, and control}, volume~10.
\newblock Springer New York, 1999.

\bibitem{Sreenath2011}
K.~Sreenath.
\newblock {A Compliant Hybrid Zero Dynamics Controller for Stable, Efficient
  and Fast Bipedal Walking on MABEL}.
\newblock {\em The International Journal of Robotics Research},
  30(9):1170--1193, Aug. 2011.

\bibitem{van2014port}
A.~van~der Schaft, D.~Jeltsema, et~al.
\newblock Port-hamiltonian systems theory: An introductory overview.
\newblock {\em Foundations and Trends{\textregistered} in Systems and Control},
  1(2-3):173--378, 2014.

\bibitem{Westervelt2007a}
E.~R. Westervelt, J.~W. Grizzle, C.~Chevallereau, J.~H. Choi, and B.~Morris.
\newblock {\em {Feedback Control of Dynamic Bipedal Robot Locomotion}}.
\newblock Control and Automation. CRC Press, Boca Raton, June 2007.

\end{thebibliography}

% \begin{thebibliography}{99}
% \bibitem{c1}
% J.G.F. Francis, The QR Transformation I, {\it Comput. J.}, vol. 4, 1961, pp 265-271
% \bibitem{c2}
% H. Kwakernaak and R. Sivan, {\it Modern Signals and Systems}, Prentice Hall, Englewood Cliffs, NJ; 1991.
% \bibitem{c3}
% D. Boley and R. Maier, "A Parallel QR Algorithm for the Non-Symmetric Eigenvalue Algorithm", {\it in Third SIAM Conference on Applied Linear Algebra}, Madison, WI, 1988, pp. A20.
% \end{thebibliography}

\end{document}